\DeclareMathOperator{\tr}{Tr}
\newcommand{\rmp}{\ensuremath{\mathrm{p}}}
\newcommand{\HH}{\ensuremath{\mathcal{H}}}
\newcommand{\Hn}{\ensuremath{\mathcal{H}^{\otimes n}}}
\newcommand{\bbR}{\mathbb{R}}
\newcommand{\bbC}{\mathbb{C}}
\newcommand{\bbP}{\mathbb{P}}
\newcommand{\calL}{\mathcal{L}}
\newcommand{\calD}{\mathcal{D}}
\newcommand{\scrS}{\mathscr{S}}
\newcommand{\scrL}{\mathscr{L}}
\newcommand{\sfH}{\mathsf{H}}
\newcommand{\sfP}{\mathsf{P}}
\newcommand{\sfM}{\mathsf{M}}
\newcommand{\sfV}{\mathsf{V}}
\newcommand{\bra}[1]{\ensuremath{|#1\rangle}}
\newcommand{\ket}[1]{\ensuremath{\langle#1|}}
\newcommand{\braket}[2]{\ensuremath{|#1\rangle\langle#2|}}
\newcommand{\ketbra}[2]{\ensuremath{\langle#1|#2\rangle}}
\newcommand{\btoi}[1]{\ensuremath{\lfloor #1\rfloor}}
\newcommand{\itob}[1]{\ensuremath{\lceil #1\rceil}}
\newcommand{\vr}{\mathbf{r}}
\newcommand{\vL}{\mathbf{L}}
\newcommand{\vD}{\mathbf{D}}
\newcommand{\vP}{\mathbf{P}}
\newcommand{\vW}{\mathbf{W}}
\newcommand{\vx}{\mathbf{x}}
\def\P{\ensuremath{\mathsf{P}}}
\newcommand{\M}{\ensuremath{\mathsf{M}}}
\newcommand*\mcapinn[2]{\vcenter{\hbox{$\mathsurround=0pt
 \ifx\displaystyle#1\textstyle\else#1\fi\bigcap$}}}
\newcommand*\mcupinn[2]{\vcenter{\hbox{$\mathsurround=0pt
 \ifx\displaystyle#1\textstyle\else#1\fi\bigcup$}}}
\DeclareFontFamily{OT1}{pzc}{}
\DeclareFontShape{OT1}{pzc}{m}{it}{<-> s * [1.200] pzcmi7t}{}
\DeclareMathAlphabet{\mathpzc}{OT1}{pzc}{m}{it}
\newcommand{\pzcV}{\mathpzc{V}}
\newtheorem{theorem}{Theorem}
\newtheorem{definition}{Definition}
\newtheorem{lemma}{Lemma}
\newtheorem{remark}{Remark}
\newtheorem{proposition}{Proposition}
\newenvironment{proof}[1][Proof]{\noindent\textbf{#1.} }{\hfill \rule{0.5em}{0.5em}}
\begin{document}
	
	\title{\bf Measurement-Induced Boolean Dynamics for Open Quantum Networks}
	
	\author{Hongsheng Qi\thanks{Key Laboratory of Systems and Control, Institute of Systems Science, Academy of Mathematics and Systems Science, Chinese Academy of Sciences, Beijing 100190, China; Research School of Electrical, Energy and Materials Engineering,  The Australian National University, Canberra 0200, Australia. E-mail: qihongsh@amss.ac.cn.}, Biqiang Mu\thanks{Key Laboratory of Systems and Control, Institute of Systems Science, Academy of Mathematics and Systems Science, Chinese Academy of Sciences, Beijing 100190, China. E-mail: bqmu@amss.ac.cn.}, Ian R. Petersen\thanks{Research School of Electrical, Energy and Materials Engineering,  The Australian National University, Canberra 0200, Australia. E-mail: ian.petersen@anu.edu.au.}, Guodong Shi\thanks{Australian Center for Field Robotics, School of Aerospace, Mechanical and Mechatronic Engineering, The University of Sydney, NSW 2008, Australia. E-mail: guodong.shi@sydney.edu.au.}}
	
	\date{}

	\maketitle
	
	\begin{abstract}
		In this paper, we study the recursion of measurement outcomes for open quantum networks under sequential measurements. Open quantum networks are networked quantum subsystems (e.g., qubits) with the state evolutions described by a continuous Lindblad master equation. When measurements are performed sequentially along such continuous dynamics, the quantum network states undergo random jumps and the corresponding measurement outcomes can be described by a vector of probabilistic Boolean variables. The induced recursion of the Boolean vectors forms a probabilistic Boolean network. First of all, we show that the state transition of the induced Boolean networks can be explicitly represented through realification of the master equation. Next, when the open quantum dynamics is relaxing in the sense that it possesses a unique equilibrium as a global attractor, structural properties including absorbing states, reducibility, and periodicity for the induced Boolean network are direct consequences of the relaxing property. Particularly, we show that generically, relaxing quantum dynamics leads to irreducible and aperiodic chains for the measurement outcomes. Finally, we show that for quantum consensus networks as a type of non-relaxing open quantum network dynamics, the communication classes of the measurement-induced Boolean networks are encoded in the quantum Laplacian of the underlying interaction graph. 	\end{abstract}



\section{Introduction}

It is widely believed that our future information infrastructures will be built on quantum technologies, where computing and communication take place over states of quantum systems \citep{Nielsen}. Quantum systems are fundamentally different from classical states from the following aspects: quantum states are vectors in finite or infinite dimensional Hilbert spaces; isolated quantum systems exhibit closed dynamics described by Schr\"odinger equations; performing measurements over a quantum system yields random outcomes and the quantum system collapses to states corresponding to the measurement outcomes. On one hand, these distinct properties of quantum systems empower quantum computing and communication to a degree that classical systems cannot achieve. On the other hand, the creation, measurement, preservation, and manipulation of quantum systems become very difficult, especially at a large scale \citep{Wiseman-Milburn-2009}. 

Quantum systems may also be exposed to external environments with whom they form closed quantum dynamics. 
When a Markovian approximation is applied
under the assumption of a short environmental correlation time permitting the neglect of memory
effects \citep{Breuer-Petruccione-2002}, a master equation can be used to describe
the quantum state evolution \citep{Lindblad1976}, forming open quantum systems. The interaction between an open quantum system and its driven environment essentially define input-output feedback channels \citep{ian2008} within the overall system. Open quantum systems arise from a large context of quantum systems and quantum engineering \citep{Ticozzi-Viola-2008,Ticozzi-Viola-2009,WangPRA,Altafini2002,Altafini-Ticozzi-2012,Wang2012}. Particularly, if we connect a network of quantum subsystems such as qubits by a series of local environments, an open quantum network is obtained. Open quantum networks have proven to be a resource for universal quantum computing \citep{NaturePhysiscs}, and a way of realizing quantum consensus and synchronization at the quantum level~\citep{Ticozzi2015,shi2016}.

In practice, knowledge of any quantum system has to come from some form of quantum measurements, where only random outcomes can be obtained.
Repeated quantum measurements are the major approach for identifying unknown quantum states, where independent measurements are carried out for identical copies of the states for identification.
For the manipulation of quantum states, sequential quantum measurements can be performed as a way of realizing feedforward control with a static quantum plant \citep{Rabitz-PRA}, or measurement feedback control with closed quantum dynamics \citep{Belavkin1979,QubitFeedback2014}. Due to the complexity of the open quantum dynamics, the role of sequential measurements have not been quite understood in the literature even only for the recursion of the measurement outcomes.

In this paper, we study open quantum networks of qubits with sequential measurements. When measurements are performed sequentially along the continuous dynamics, the quantum network states undergo random jumps and the measurement outcomes are naturally described by a vector of random Boolean variables, forming a probabilistic Boolean network \citep{Probabilistic-Boolean-Network}. The induced recursion of the Boolean networks defines a Markov chain, which is governed both by the master equation of the continuous quantum dynamics, and the basis of the network measurement. We establish a clear and explicit representation for the state transition of the random measurement outcomes from realification of the master equation. Moreover, for relaxing and non-relaxing quantum network dynamics, respectively, we establish the following results.
\begin{itemize}
\item[(i)] When the open quantum dynamics is relaxing, i.e., it possesses a unique equilibrium that is globally asymptotically stable, structural properties including absorbing states, reducibility, and periodicity for the induced Boolean networks are established directly from the relationship between the master equation and the measurement basis. Particularly, we show that as a generic property, relaxing quantum dynamics leads to irreducible and aperiodic chains for the measurement outcomes. 
\item[(ii)]We show that for quantum consensus networks as a special type of non-relaxing open quantum network dynamics, the communication classes of the measurement-induced Boolean networks are fully encoded in the quantum Laplacian of the underlying interaction graph.
\end{itemize}

The remainder of the paper is organized as follows. In Section \ref{Sec2}, we present some preliminary theory and notations for the sake of achieving self-contained presentation. In Section \ref{sec3}, we introduce the quantum network model, the resulting hybrid quantum network dynamics, and the definition of the induced probabilistic Boolean network from the measurement outcomes. In Section \ref{sec4}, we establish the representation of the state transition of the Boolean network. Then in Sections \ref{sec5} and \ref{sec6}, we present the results for relaxing and non-relaxing quantum dynamics, respectively. Finally some concluding remarks are presented in Section \ref{sec7}. 

\section{Preliminaries}\label{Sec2}
In this section, we introduce some concepts and theory from graph theory \citep{godsil}, open quantum systems \citep{Nielsen}, and Markov chains \citep{durrett}.

\subsection{Graph Theory Essentials}
A simple undirected graph $\mathrm
{G}=(\mathrm {V}, \mathrm {E})$ consists of a finite set
$\mathrm{V}=\{1,\dots,n\}$ of nodes and an edge set
$\mathrm {E}$, where an element $e=\{i,j\}\in\mathrm {E}$ denotes an
{\it edge} between two distinct nodes $i\in \mathrm{V}$ and $j\in\mathrm{V}$. Two nodes $i,j\in\mathrm{V}$ are said to be {\it adjacent}
if $\{i,j\}$ is an edge in $\mathrm{E}$. The number of adjacent nodes of $v$ is called its degree, denoted ${\rm deg}(v)$.
The nodes that are adjacent with a node $v$ as well as itself are called its neighbors. A graph $\mathrm
{G}$ is called to be {\it regular} if all the nodes have the same degree. A path between two vertices $v_1$ and $v_k$ in $\mathrm{G}$ is a sequence of distinct nodes
$
v_1v_2\dots v_{k}$
such that for any $m=1,\dots,k-1$, there is an edge between $v_m$ and $v_{m+1}$. A pair of distinct nodes $i$ and $j$
is called to be {\it reachable} from each other if there is a path between them. A node is always assumed to be reachable
from itself. We call graph $\mathrm{G}$ {\it connected} if every pair of distinct nodes in $\mathrm{V}$ are reachable from each other.
 A subgraph of $\mathrm{G}$ associated with node set $\mathrm{V}^\ast \subseteq \mathrm{V}$, denoted as $\mathrm{G}|_{\mathrm{V}^\ast}$,
 is the graph $(\mathrm{V}^\ast, \mathrm{E}^\ast)$, where $\{i,j\}\in \mathrm{E}^\ast$ if and only if $\{i,j\}\in \mathrm{E}$ for $i,j\in \mathrm{V}^\ast$.
 A connected component (or just component) of $\mathrm
{G}$ is a connected subgraph induced by some $\mathrm{V}^\ast \subseteq \mathrm{V}$, which is connected to no additional nodes in $\mathrm
{V}\setminus \mathrm{V}^\ast$.

The (weighted) Laplacian of $\mathrm
{G}$, denoted $L(\mathrm
{G})$, is defined as \citep{Magnus}
$$
L(\mathrm
{G})=D(\mathrm
{G})-A(\mathrm
{G}),
$$
where $A(\mathrm{G})$ is the $n\times n$ matrix given by $[A(\mathrm{G})]_{kj}=[A(\mathrm{G})]_{jk}=
a_{kj}$ for some $a_{kj}>0$ if $\{k,j\} \in \mathrm{E}$ and $[A(\mathrm{G})]_{kj}=0$ otherwise, and $D(\mathrm
{G})={\rm diag}(d_1,\dots,d_N)$ with $d_k=\sum_{j=1,j\neq k}^N [A(\mathrm{G})]_{kj}$. It is well known that $L(\mathrm
{G})$ is always positive semi-definite, and ${\rm rank}(L(\mathrm
{G}))=n-C_\ast(\mathrm{G})$, where $C_\ast(\mathrm{G})$ denotes the number of connected components of $\mathrm{G}$.

\subsection{Open Quantum Systems}
\subsubsection{Quantum States}The state space of any isolated quantum system is a complex vector space with inner product, i.e., a Hilbert space $\HH_N\simeq\bbC^N$ with $N\ge 2$.
The system state is described by a unit vector in $\HH_N$
denoted by $\bra{\varphi}$, where $\bra{\cdot}$ is known as the Dirac notion for vectors representing (pure) quantum states. The states of a composite quantum system of two subsystems with state space $\HH_A$ and $\HH_B$, respectively, are complex linear combinations of $\bra{\varphi_A}\otimes \bra{\varphi_B}$, where $\bra{\varphi_A}\in\HH_A$, $\bra{\varphi_B}\in\HH_B$. For any $\bra{p}, \bra{q} \in \HH_N$, we use the notation $\bra{p} \ket{q}$ to denote the operator over $\HH_N$ defined by
$$
\big(\bra{p} \ket{q} \big) \bra{\eta}= \Big \langle \bra{q}, \bra{\eta}\Big\rangle \bra{p}, \ \ \forall \bra{\eta} \in \HH_N,
$$
where $\big\langle \cdot, \cdot \big\rangle$ represents the inner product that the Hilbert space $\HH_N$ is equipped with.
In standard quantum mechanical notation, the inner product $\big \langle \bra{p}, \bra{q}\big\rangle $ is denoted as $\ketbra{p}{q}$.

 Quantum states as ensembles of pure states can also be described by a positive Hermitian density operator over the space $\HH_N$ (or density matrix) $\rho$ satisfying $\text{Tr}(\rho)=1$. 
 
 \subsubsection{Quantum Measurements}
 Let $\calL(\HH_N)$ be the space of linear operators over $\HH_N$. For a quantum system associated with state space $\HH_N$, a projective measurement is described by an observable $\M$, which is a Hermitian operator in $\calL(\HH_N)$. The observable $\M$ has a spectral decomposition in the form of
$$
\M=\sum\limits_{m=0}^{N-1} \lambda_m\P_m,
$$
where $\P_m$ is the projector onto the eigenspace of $\M$ with eigenvalue $\lambda_m$. The possible outcomes of the measurement correspond to the eigenvalues $\lambda_m$, $m=0,\dots,N-1$ of the observable. Upon measuring the state $\bra{\varphi}$ (or $\rho$), the probability of getting result $\lambda_m$ is given by
$
p(\lambda_m)=\ket{\varphi}\P_m\bra{\varphi}
$ (or ${\rm Tr}(\rho \P_m)$).
Given that outcome $\lambda_m$ occurred, the state of the quantum system immediately after the measurement is
$\frac{\P_m\bra{\varphi}}{\sqrt{p(\lambda_m)}}$ (or $\frac{\P_m\rho\P_m}{p(\lambda_m)}$). 

\subsubsection{Master Equations}
The time evolution of the state $\bra{\varphi(s)}\in\HH_N$ of a closed quantum system is described by a Schr\"odinger equation. A quantum systems may also interact with external environments, who are quantum systems by themselves, and the composite system generated by the system and the environments form an isolated (closed) quantum system. When a Markovian approximation can be applied
under the assumption of a short environmental correlation time permitting the neglect of memory
effects \citep{Breuer-Petruccione-2002}, the time evolution of an open quantum system can be described by a Lindblad master equation as
\begin{align}\label{eq:lindblad}
\dot\rho(s) = -\imath[\sfH,\rho(s)]+\scrL_D(\rho(s))\equiv \scrL(\rho(s)),
\end{align}
where $\imath=\sqrt{-1}$, $\sfH$ is the Hamiltonian as a Hermitian operator
over $\HH_N$, and $\scrL_D(\rho)=\sum_{d}\mathcal{D}[\sfV
_d]\rho$ is the Lindblad operator from environments. The $\sfV_d$ are operators over $\HH_N$, and
\begin{align}
\calD[\sfV_d]\rho(s)=\sfV_d\rho(s) \sfV^\dag_d-\frac{1}{2}[\sfV_d^\dag \sfV_d\rho(s)+\rho(s) \sfV_d^\dag \sfV_d]
\end{align}
where $(\cdot)^\dag$ represents the Hermitian conjugate. Let $e^{\scrL s}$ be the quantum dynamical map governed by \eqref{eq:lindblad} for $s\ge 0$. Then $\{e^{\scrL s}\}_{s\ge 0}$ forms a semigroup \citep{Breuer-Petruccione-2002}. 

\begin{definition}\label{def1}
	The semigroup $\{e^{\scrL s}\}_{s\ge 0}$ is relaxing if there exists a unique (steady) state $\rho_\star$, such that $e^{\scrL s}(\rho_\star)=\rho_\star$ for all $s$, and 
	$$
	\lim_{s\rightarrow\infty}e^{\scrL s}(\rho(0))=\rho_\star,
	$$
	for all $\rho(0)$.
\end{definition}

Note that from the dynamical system perspective, the semigroup $\{e^{\scrL s}\}_{s\ge 0}$ being relaxing means the system \eqref{eq:lindblad} has a unique equilibrium that is globally asymptotically stable.

\subsection{Markov Chains}

Let $\scrS$ be the finite set $\{1,2,\dots,m\}$. Let $\vP$ be an $m\times m$ non-negative matrix with $\sum_{j=1}^{m}[\vP]_{ij}=1$ for $i\in\scrS$. A stochastic process $\{\vx(t)\}_{t=0}^{\infty}$ with state space $\scrS$ is called a homogeneous Markov chain with transition matrix $\vP$, if there holds
$$
\bbP(\vx(t+1)|\vx(0),\dots,\vx(t))=\bbP(\vx(t+1)|\vx(t))
$$
for $t=0,1,2,\dots$, and 
$$
\bbP(\vx(t+1)=j|\vx(t)=i)=[\vP]_{ij},
$$
for all $i,j\in\scrS$. 

Let row vector $\pi_0$ be the initial distribution of the time homogeneous Markov chain $\{\vx(t)\}_{t=0}^{\infty}$ with $[\pi_0]_i=\bbP(\vx(0)=i)$ and $\sum_{i\in\scrS}[\pi_0]_i=1$.
Let $\pi_t$ denote the distribution of the chain at time $t$, i.e., $[\pi_t]_i=\bbP(\vx(t)=i)$. Then there holds $$
\begin{aligned}
\left[\pi_{t+1}\right]_j = \sum_{i=1}^{m} [\pi_{t}]_i [\vP]_{ij},
\end{aligned}
$$
or in a compact form, 
$
\pi_{t+1} = \pi_t\vP.
$

A time homogeneous Markov chain $\{\vx(t)\}_{t=0}^{\infty}$ with the state space $\scrS$ is called irreducible if there exists an integer $l\geq 1$ such that $[\vP^l]_{ij}>0$ for any $i,j\in \scrS$.
The period $d(i)$ of a state $i\in \scrS$ is defined as the greatest common divisor of all $l$ that satisfy $[\vP^l]_{ii}>0$ and $\{\vx(t)\}_{t=0}^{\infty}$ is called aperiodic if all the states have period one. If the chain is both irreducible and aperiodic, then there exist a row vector $\pi_\ast$ satisfying 
$$
\pi_\ast=\lim_{t\to \infty}\pi_0\vP^t
$$
for all initial distribution $\pi_0$. In that case $\pi_\ast$ is termed the stationary distribution of the Markov chain. 

\section{Problem Definition}\label{sec3}
\subsection{Qubit Networks}

Qubit is the simplest quantum system whose state space is a two-dimensional Hilbert space $\HH$ ($:=\HH_2$). 
Consider a quantum network with $n$ qubits, which are indexed by $\mathrm{V}=\{1,\dots,n\}$. The state space of the $n$-qubit network is denoted as $\Hn=\underbrace{\HH\otimes\cdots\otimes\HH}_n$ (i.e., $\HH_{2^n}$). The density operator of the $n$-qubit network is denoted as $\rho$. Let there be an observable (or a projective measurement) for a single qubit as
$$
\sfM=\lambda_0\sfP_0+\lambda_1\sfP_1,
$$
where $\sfP_m=\braket{v_m}{v_m}$ is the projector onto the eigenspace generated by the eigenvector $\bra{v_m} \in \HH_{2}$ with eigenvalue $\lambda_m$, $m\in\{0,1\}$. Then $\sfM^{\otimes n}$ is an observable of the $n$-qubit network. 

\subsection{Open Quantum Networks with Sequential Measurements}

Consider the continuous time horizon for $s\in[0,\infty)$. Let the open quantum network state $\rho(s)$ be measured along $\sfM^{\otimes n}$ from $s=0$ periodically with a period $\tau$.
To be precise, $\rho(s)$ satisfies the following hybrid dynamics
\begin{subequations}\label{eq:hybrid}
\begin{align}\label{eq:hybrid.dynamic}
&\dot\rho(s) = \mathscr{L}(\rho(s)),\quad s\in[t\tau,(t+1)\tau),\\
\label{eq:hybird.initial}
&\rho(t\tau) = \rho_\rmp((t\tau)^-),
\end{align}
\end{subequations}
for $t=0,1,2,\dots$. Here
 $\rho((t\tau)^-)$ represents the quantum network state right before $t\tau$ along \eqref{eq:hybrid.dynamic} starting from $\rho(t\tau)$, and $\rho_\rmp((t\tau)^-)$ is the post-measurement state of the network when a measurement is performed at time $s=t\tau$ along $\sfM^{\otimes n}$, respectively.

We further introduce 
$$
\begin{aligned}
\xi(t)&:=\rho((t\tau)^-),\\
\xi_\rmp(t)&:=\rho_\rmp((t\tau)^-),
\end{aligned}
$$
for the pre- and post-measurement network states at the $t$-th measurement.

\subsection{Induced Boolean Networks}
The measurement $\M^{\otimes n}$ measures the individual qubit states of the entire network, which yields $2^n$ possible outcomes $[\lambda_{m_1},\dots,\lambda_{m_n}],{m_j}\in\{0,1\}, j=1,\dots,n$. 
We use the Boolean variable $x_i(t)\in\{0,1\}$ to represent the measurement outcome at qubit $i$ for step $t$, where $x_i(t)=0$ corresponds to $\lambda_0$ and $x_i(t)=1$ corresponds to $\lambda_1$. We can further define the $n$-dimensional random Boolean vector
$$
\mathbf{x}(t)=[x_1(t),\cdots,x_n(t)]\in \{0,1\}^{n}
$$
as the outcome of measuring $\xi(t)$ 
under $\M^{\otimes n}$ at step $t$. 

Clearly, $\{\vx(t)\}_{t=0}^{\infty}$ forms a Markov chain as the distribution of $\mathbf{x}(t+1)$ is fully determined by $\xi_\rmp(t+1)$, which depends only on $\mathbf{x}(t)$, e.g., Figure~\ref{fig:induced.boolean}. The $\mathbf{x}(t),t=0,1,2,\dots$ therefore falls to the category of classical probabilistic Boolean networks \citep{Probabilistic-Boolean-Network}. 

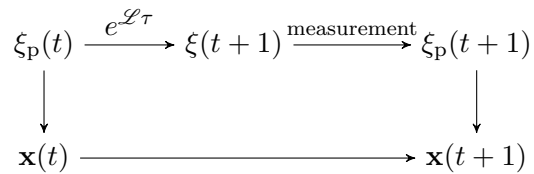
\begin{figure}[!htb]
	\centering
	\begin{tikzpicture}[->,>=stealth',auto,node distance=2.5cm]
	\tikzstyle{state}=[semithick,minimum size=6mm]
	\node[state] (q2) {$\xi_\rmp(t)$};
	\node[state,right of=q2] (q3) {$\xi(t+1)$};
	\node[right of=q3,xshift=1.8em] (q4) {$\xi_\rmp(t+1)$};
	\node[state,below of=q2,yshift=1cm] (x1) {$\mathbf{x}(t)$};
	\node[state,below of=q4,yshift=1cm] (x2) {$\mathbf{x}(t+1)$};
	\path (q2) edge node[above]{$e^{\scrL \tau}$} (q3);
	\path (q3) edge node[above]{\scriptsize measurement} (q4);
	\path[->] (q2) edge (x1);
	\path[->] (q4) edge (x2);
	\path (x1) edge node[above]{} (x2);
	\end{tikzpicture}
	\caption{Induced Boolean network dynamics from the sequential quantum measurements. }\label{fig:induced.boolean}
\end{figure}
\subsection{Problems of Interest}
In this paper, we are interested in the properties of the induced Boolean network dynamics. 
Particularly, we aim to address the following questions:
\begin{itemize}
\item[(i)] How can we represent the state transition of the $\mathbf{x}(t)$?
\item[(ii)] When and how can we characterize the basic properties of $\mathbf{x}(t)$ as a Markov chain, e.g., absorbing states, reducibility and periodicity, communication classes?
\item[(iii)] Can we establish a clear relationship between the quantum interaction structure encoded in the $\mathscr{L}$, and structures in the state space of $\mathbf{x}(t)$?
\end{itemize}
Answers to these questions will add to the understandings of the behaviors of open quantum systems in the presence of sequential measurements. 

\section{State Transition Representation}\label{sec4}
In this section, we establish an explicit representation of the state transition of the chain $\{\mathbf{x}(t)\}_{t=0}^\infty$. Such a representation is certainly non-unique, and we choose to carry out the analysis under the following standard realification of the master equation (\ref{eq:hybrid.dynamic}) (cf. e.g., \cite{WangPRA}).

Denote $N=2^n$. Let there be an orthonormal basis ${\bm\sigma}=\{\sigma_k\}_{k=1}^{N^2}$ for Hermitian operators on $\Hn$ by $\sigma_k=\lambda_{pq}$, $k=p+(q-1)N$ and $1\leq p<q\leq N$, where
\begin{align*}
&\lambda_{pq}= \frac{1}{\sqrt{2}}(\braket{p}{q}+\braket{q}{p}),\\
&\lambda_{qp}= \frac{1}{\sqrt{2}}(-i\braket{p}{q}+i\braket{q}{p}),\\
&\lambda_{pp}= \frac{1}{\sqrt{p+p^2}}\left(\sum_{k=1}^p\braket{k}{k}-p\braket{p+1}{p+1}\right).
\end{align*}
Under the basis ${\bm\sigma}$, $\rho$ is represented as a real vector $\mathbf{r}=(r_1, \dots, r_{N^2})^\top\in\bbR^{N^2}$
$$
\rho = \sum_{k=1}^{N^2}r_k\sigma_k= \sum_{k=1}^{N^2}\tr(\rho\sigma_k)\sigma_k.
$$
 
Then the Lindblad master equation \eqref{eq:lindblad} can be equivalently expressed as a real differential equation
\begin{align}\label{eq:lindblad-real}
\dot{\vr}=\left(\vL+\sum_{d}\vD^{(d)}\right)\vr := \vW\vr,
\end{align}
where $\vL$, $\vD^{(d)}\in\bbR^{N^2\times N^2}$ with entries
\begin{subequations}
\begin{align}\label{subeq:lmn}
L_{mn}&=\tr(\imath\sfH[\sigma_m,\sigma_n]),\\\label{subeq:dmn}
D_{mn}^{(d)}&=\tr(\sfV_d^\dag\sigma_m \sfV_d\sigma_n)-\frac{1}{2}\tr(\sfV_d^\dag \sfV_d\{\sigma_m ,\sigma_n\}).
\end{align}
\end{subequations}
\subsection{Transition Matrix}

Let $\pzcV:=\{1,\dots,N\}$. We introduce two mappings: 
\begin{itemize}
\item[(i)] $\btoi{\cdot}:\{0,1\}^n\rightarrow\pzcV$, where $\btoi{i_1\cdots i_n}=\sum_{k=1}^n i_k2^{n-k}+1$; 

\medskip

\item[(ii)]
$\itob{\cdot}:\pzcV\rightarrow\{0,1\}^n$ with $\itob{i}=[i_1\dots i_n]$ satisfying $i=\sum_{k=1}^n i_k2^{n-k}+1$.
\end{itemize}

Let $\sfM_{\itob{i}}:=\mathsf{P}_{i_1}\otimes\dots\otimes \mathsf{P}_{i_n}$ denote the projector onto the eigenspace generated by $\bra{v_{i_1}\cdots v_{i_n}}$ for $i_k\in\{0,1\}, k=1,\dots,n$. Upon measuring the network state $\rho$, the probability of observing $\itob{i}$ is given by
$$p(\itob{i})=\tr(\sfM_{\itob{i}}\rho).$$
Given that the outcome $\itob{i}$ occurred, the qubit network state immediately after the measurement is
$$\rho_\rmp= \braket{v_{i_1}\cdots v_{i_n}}{v_{i_1}\cdots v_{i_n}}.$$
 Then $\sfM_{\itob{i}}$ is expressed under the basis ${\bm\sigma}$ as
$$
\sfM_{\itob{i}}=\sum_{k=1}^{N^2} \theta_{i_k}\sigma_k.
$$
 Denote $\theta_i=[\theta_{i_1},\dots,\theta_{i_{N^2}}]^\top$, $i\in\pzcV$. Let $\Theta=[\theta_1,\dots,\theta_N]\in\mathbb{R}^{N^2\times N}$. 

The following theorem presents an explicit representation of the state transition characterization for the induced Boolean series $\{\mathbf{x}(t)\}_{t=0}^{\infty} $.

\begin{theorem}\label{thm:transition}
	 Along the quantum system (\ref{eq:hybrid.dynamic})--(\ref{eq:hybird.initial}), the induced Boolean network dynamics $\{\mathbf{x}(t)\}_{t=0}^{\infty} $ form a stationary Markov chain over the state space $\{0,1\}^n$, whose state transition matrix is described by 
	$$
	\vP_\tau=\Theta^\top e^{\mathbf{W}\tau}\Theta,
	$$
	where $[\vP_\tau]_{ij}$ is the transition probability from $\itob{i}$ to $\itob{j}$, here $i,j\in\pzcV$.
\end{theorem}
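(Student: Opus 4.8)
## Proof Proposal

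The plan is to track a single transition step of the chain and express each probability $[\vP_\tau]_{ij} = \bbP(\vx(t+1) = \itob{j} \mid \vx(t) = \itob{i})$ as a product of three linear-algebraic factors: the post-measurement state associated with outcome $\itob{i}$, the realified dynamical map $e^{\vW\tau}$ propagating it forward over one period, and the measurement functional that reads off outcome $\itob{j}$. The Markov property is already granted by the discussion around Figure~\ref{fig:induced.boolean}, so the substantive content is purely the formula for the one-step transition matrix.

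First I would condition on $\vx(t) = \itob{i}$. By the measurement postulate recalled in Section~\ref{sec3}, the post-measurement network state is $\xi_\rmp(t) = \braket{v_{i_1}\cdots v_{i_n}}{v_{i_1}\cdots v_{i_n}} = \sfM_{\itob{i}}$ (this rank-one projector is already a unit-trace density operator, so no normalization is needed). Crucially, this post-measurement state depends on the outcome $\itob{i}$ only, not on the pre-measurement state $\xi(t)$ — this is what makes the chain homogeneous and the transition matrix well defined. Its coordinate vector in the basis $\bm\sigma$ is exactly $\theta_i$, by the definition $\sfM_{\itob{i}} = \sum_k \theta_{i_k}\sigma_k$. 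Next, the continuous Lindblad flow over the interval $[t\tau,(t+1)\tau)$ carries $\xi_\rmp(t)$ to $\xi(t+1) = e^{\scrL\tau}(\xi_\rmp(t))$; under realification \eqref{eq:lindblad-real} this is the linear map $\vr \mapsto e^{\vW\tau}\vr$, so the coordinate vector of $\xi(t+1)$ is $e^{\vW\tau}\theta_i$. Finally, measuring $\xi(t+1)$ yields outcome $\itob{j}$ with probability $p(\itob{j}) = \tr(\sfM_{\itob{j}}\,\xi(t+1))$; since $\bm\sigma$ is an \emph{orthonormal} basis for Hermitian operators with respect to the Hilbert–Schmidt inner product $\langle A,B\rangle = \tr(AB)$, this trace equals the Euclidean inner product of the coordinate vectors, i.e. $\theta_j^\top e^{\vW\tau}\theta_i$. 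Assembling, $[\vP_\tau]_{ij} = \theta_j^\top e^{\vW\tau}\theta_i = [\Theta^\top e^{\vW\tau}\Theta]_{ji}$; after matching the index convention in the statement (transition from $\itob{i}$ to $\itob{j}$ sits in row $i$, column $j$), this is precisely $\vP_\tau = \Theta^\top e^{\vW\tau}\Theta$.

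To finish I would verify that $\vP_\tau$ is genuinely stochastic: each entry is a product of a probability amplitude through a completely positive trace-preserving map, hence nonnegative, and $\sum_j [\vP_\tau]_{ij} = \sum_j \tr(\sfM_{\itob{j}}\,\xi(t+1)) = \tr\big((\sum_j \sfM_{\itob{j}})\,\xi(t+1)\big) = \tr(\xi(t+1)) = 1$, using $\sum_j \sfM_{\itob{j}} = I$ (completeness of the projective measurement $\sfM^{\otimes n}$) and trace preservation of the Lindblad semigroup. Stationarity (time-homogeneity) follows since none of the three factors depends on $t$.

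The main obstacle is bookkeeping rather than conceptual: one must be careful that the realification \eqref{eq:lindblad-real} is set up with respect to the \emph{same} orthonormal basis $\bm\sigma$ used to define $\Theta$, so that "coordinates of $e^{\scrL\tau}(\rho)$" really are $e^{\vW\tau}$ times "coordinates of $\rho$" — this is what licenses replacing the trace pairing by a Euclidean inner product and collapsing everything into $\Theta^\top e^{\vW\tau}\Theta$. A secondary subtlety is the index/transpose convention (whether $[\vP_\tau]_{ij}$ is $\theta_i^\top e^{\vW\tau}\theta_j$ or its transpose); this must be reconciled with the orientation of $\vW$ in \eqref{eq:lindblad-real}, but it amounts only to checking that $\tr(\sfM_{\itob{j}}\, e^{\scrL\tau}(\sfM_{\itob{i}}))$ lands in the $(i,j)$ slot as claimed, not the $(j,i)$ slot.
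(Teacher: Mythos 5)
Your proposal is correct and follows essentially the same route as the paper's own proof: condition on $\vx(t)=\itob{i}$, identify the post-measurement state with $\sfM_{\itob{i}}$ (coordinates $\theta_i$), propagate by $e^{\vW\tau}$ under the realification, and convert $\tr(\sfM_{\itob{j}}\xi(t+1))$ into $\theta_j^\top e^{\vW\tau}\theta_i$ via orthonormality of $\bm\sigma$. Your added checks (row-stochasticity via $\sum_j \sfM_{\itob{j}}=I$ and the explicit flagging of the transpose/index convention, which the paper glosses over with ``compact matrix form'') are sensible refinements rather than a different argument.
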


\begin{proof}
	Note that given $\mathbf{x}(t)$, $\xi_\rmp(t)$ is uniquely determined, which leads to $\xi(t+1)=e^{\scrL \tau}(\xi_\rmp(t))$ along \eqref{eq:hybrid.dynamic}. Therefore, the distribution of $\mathbf{x}(t+1)$, as the outcome of measuring $\xi_\rmp(t+1)$ along $\sfM^{\otimes n}$, depends only on $\mathbf{x}(t)$. This immediately implies that $\{\mathbf{x}(t)\}_{t=0}^{\infty} $ is Markovian.
	
	Next, we show the state transition matrix of $\{\mathbf{x}(t)\}_{t=0}^{\infty}$ is $\vP_\tau=\Theta^\top e^{\mathbf{W}\tau}\Theta$ by computing its each entry $[\vP_\tau]_{ij}$.
	Let $\mathbf{x}(t)=\itob{i},~i\in\pzcV$. Then the post-measurement state at step $t$ is
	$$
	\begin{aligned}
	\xi_\rmp(t)&=\braket{v_{i_1}\cdots v_{i_n}}{v_{i_1}\cdots v_{i_n}}\\
	&=\braket{v_{i_1}}{v_{i_1}}\otimes\dots\otimes\braket{v_{i_n}}{v_{i_n}}\\
	&=\mathsf{P}_{i_1}\otimes\dots\otimes \mathsf{P}_{i_n}\\
	&=\sfM_{\itob{i}}=\sum_{k=1}^{N^2}\theta_{i_k}\sigma_k.
	\end{aligned}
	$$
We now proceed to compute $\xi(t+1)$ via the real differential equation \eqref{eq:lindblad-real} of \eqref{eq:hybrid.dynamic}.
	From \eqref{eq:lindblad-real}, the coordinate vector of $\xi(t+1)$ is $e^{\mathbf{W}\tau}\theta_i$, which leads to
	$$
	\xi(t+1) = \sum_{k=1}^{N^2} [e^{\mathbf{W}\tau}\theta_i]_k\sigma_k.
	$$
	The probability of observing $\itob{j}$, $j\in\pzcV$ upon measuring $\xi(t+1)$ along $\sfM^{\otimes n}$ is
	$$
	\begin{aligned}
	\mathbb{P}(\mathbf{x}(t+1)=\itob{j}|\mathbf{x}(t)=\itob{i})
	&=\tr(\sfM_{\itob{j}}\xi(t+1))\\
	&=\tr\left(\left(\sum_{k=1}^{N^2}\theta_{j_k}\sigma_k\right)\left(\sum_{l=1}^{N^2}[e^{\mathbf{W}\tau}\theta_i]_l\sigma_l\right)\right)\\
	&=\sum_{k=1}^{N^2}\theta_{j_k}[e^{\mathbf{W}\tau}\theta_i]_k\\
	&=\theta_j^\top e^{\mathbf{W}\tau}\theta_i.
	\end{aligned}
	$$
	We can equivalently write it in a compact matrix form, which yields $\vP_\tau=\Theta^\top e^{\mathbf{W}\tau}\Theta$ exactly.
	
	This completes the proof.
%
%
\end{proof}

\section{Relaxing Quantum Dynamics}\label{sec5}
In this section, we focus on the case where the semigroup $\{e^{\scrL s}\}_{s\ge 0}$ from (\ref{eq:hybrid.dynamic}) is relaxing in the sense of Definition \ref{def1}, i.e., there exists a unique $\rho_\star$ such that
	\begin{align}\label{eq:11}
	\lim_{s\rightarrow\infty}e^{\scrL s}(\rho(0))=\rho_\star
	\end{align}
for all $\rho(0)$. There are many examples of practical quantum multi-level systems that are indeed relaxing \citep{WangPRA}. 

The following result shows that as the interval between two consecutive measurements grows, the state-transition matrix $\mathbf{P}_\tau$ tends to be close to a rank-one matrix when the semigroup $\{e^{\scrL s}\}_{s\ge 0}$ is relaxing. 
\begin{proposition}
	Suppose the semigroup $\{e^{\scrL s}\}_{s\ge 0}$ from (\ref{eq:hybrid.dynamic}) is relaxing. Then 	 for the induced Boolean network dynamics $\{\mathbf{x}(t)\}_{t=0}^{\infty} $ of the quantum system (\ref{eq:hybrid.dynamic})--(\ref{eq:hybird.initial}), there exists $\theta_{\star}\in\mathbb{R}^{N^2}$ such that
	$$
	\lim_{\tau\rightarrow\infty}\mathbf{P}_\tau=\Theta^\top (\mathbf{1}_N^\top\otimes\theta_{\star}).
	$$
\end{proposition}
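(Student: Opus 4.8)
The plan is to read the limit directly off the explicit formula $\vP_\tau=\Theta^\top e^{\vW\tau}\Theta$ from Theorem~\ref{thm:transition}, exploiting the fact that each column of $\Theta$ is the $\bm{\sigma}$-coordinate vector of a genuine quantum state. Indeed, the $i$-th column $\theta_i$ encodes $\sfM_{\itob{i}}=\sfP_{i_1}\otimes\cdots\otimes\sfP_{i_n}$, which is a rank-one orthogonal projector on $\Hn$ and hence a bona fide density operator. Define $\theta_\star\in\bbR^{N^2}$ to be the coordinate vector of the steady state $\rho_\star$ from \eqref{eq:11}, i.e. $[\theta_\star]_k=\tr(\rho_\star\sigma_k)$. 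Since $\Theta^\top$ does not depend on $\tau$, it suffices to show $e^{\vW\tau}\theta_i\to\theta_\star$ as $\tau\to\infty$ for every $i\in\pzcV$; stacking the $N$ columns then gives $e^{\vW\tau}\Theta\to[\,\theta_\star\ \cdots\ \theta_\star\,]=\mathbf{1}_N^\top\otimes\theta_\star$, and left-multiplication by $\Theta^\top$ yields $\vP_\tau\to\Theta^\top(\mathbf{1}_N^\top\otimes\theta_\star)$.

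For the columnwise limit I would invoke the realification set up before Theorem~\ref{thm:transition} (and already used in its proof): the correspondence $\rho\leftrightarrow\big(\tr(\rho\sigma_k)\big)_{k=1}^{N^2}$ is a fixed linear isomorphism conjugating the quantum dynamical map $e^{\scrL s}$ to the matrix exponential $e^{\vW s}$, so that applying $e^{\vW s}$ to the coordinate vector of $\rho$ returns the coordinate vector of $e^{\scrL s}(\rho)$. Taking $\rho=\sfM_{\itob{i}}$, which is a legitimate state by the observation above, and using that $\{e^{\scrL s}\}_{s\ge 0}$ is relaxing, we obtain $e^{\scrL\tau}(\sfM_{\itob{i}})\to\rho_\star$ as $\tau\to\infty$. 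In the finite-dimensional space spanned by the fixed basis $\bm{\sigma}$, operator convergence is equivalent to convergence of each coordinate, so $e^{\vW\tau}\theta_i\to\theta_\star$. Only finitely many indices $i$ occur, so pointwise convergence of the columns is all that is needed, and the assembly described above finishes the proof.

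I do not expect a substantive obstacle; the single point that needs care is the logical bridge from the hypothesis—which is phrased for the operator semigroup $\{e^{\scrL s}\}$ acting on states—to the linear map $e^{\vW\tau}$ acting on all of $\bbR^{N^2}$, and this is precisely handled by noting that the vectors $\theta_i$ fed into $e^{\vW\tau}$ lie in the image of the density-operator set. Finally, to match the rank-one claim advertised before the proposition, it is worth recording that every column of $\Theta^\top(\mathbf{1}_N^\top\otimes\theta_\star)$ equals $\Theta^\top\theta_\star$, whose $i$-th entry is $\theta_i^\top\theta_\star=\tr(\sfM_{\itob{i}}\rho_\star)$, the probability of observing outcome $\itob{i}$ when $\rho_\star$ is measured along $\sfM^{\otimes n}$; hence the limiting matrix has rank one, and the induced chain asymptotically forgets its current state and resamples its outcome from the steady-state measurement law.
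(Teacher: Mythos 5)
Your proposal is correct and follows essentially the same route as the paper: define $\theta_\star$ as the $\bm{\sigma}$-coordinate vector of $\rho_\star$, use the realification to transport the relaxing property to $\lim_{\tau\to\infty}e^{\vW\tau}\theta_i=\theta_\star$ column by column, and then read the limit off $\vP_\tau=\Theta^\top e^{\vW\tau}\Theta$. Your additional care in noting that each $\theta_i$ is the coordinate vector of a genuine density operator (so the relaxing hypothesis actually applies to it) is a point the paper glosses over, but it does not change the argument.
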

\begin{proof} Let $\theta_0$ and $\theta_\star$ be the coordinates of $\rho(0)$ and $\rho_\star$ under the basis ${\bm\sigma}$, respectively. Then the semigroup $\{e^{\scrL s}\}_{s\ge 0}$ being relaxing implies that 
	$$
	\lim_{s\rightarrow\infty}e^{\mathbf{W}s}\theta_0=\theta_\star. 
	$$
As a result, invoking the representation of $\mathbf{P}_\tau$ from Theorem \ref{thm:transition}, we obtain 
	$$
	\begin{aligned}
	\lim_{\tau\rightarrow\infty}\mathbf{P}_\tau &= \lim_{\tau\rightarrow\infty}\Theta^\top e^{\mathbf{W}\tau}\Theta\\
	&= \lim_{\tau\rightarrow\infty}\Theta^\top e^{\mathbf{W}\tau}[\theta_1,\dots,\theta_{N}]\\
	& = \Theta^\top [\theta_\star,\dots,\theta_\star]\\
	& = \Theta^\top (\mathbf{1}_N^\top\otimes\theta_\star).
	\end{aligned}
	$$
	We have now completed the proof. 
\end{proof}

Next, we investigate the structural properties of the induced Boolean network dynamics $\{\mathbf{x}(t)\}_{t=0}^{\infty} $ when the semigroup $\{e^{\scrL s}\}_{s\ge 0}$ is relaxing. Recall that a state in a Markov chain is an absorbing state if it is not possible to leave whenever this chain arrived at this state. We present the following result. 
\begin{theorem}\label{thm2}
	Suppose the semigroup $\{e^{\scrL s}\}_{s\ge 0}$ from (\ref{eq:hybrid.dynamic}) is relaxing with a unique steady state $\rho_\star$. Then 	 for the induced Boolean network dynamics $\{\mathbf{x}(t)\}_{t=0}^{\infty} $ from the quantum system (\ref{eq:hybrid.dynamic})--(\ref{eq:hybird.initial}), the following statements hold. 
	\begin{itemize}
	\item[(i)]If $\rho_\star\in\{{\sfM_{\itob{i}}}\}_{i=1}^N$ and $\tau$ is sufficiently large, then the chain $\{\mathbf{x}(t)\}_{t=0}^{\infty}$ has a unique absorbing state. 
	\item[(ii)] If $\tr(\sfM_{\itob{i}}\rho_\star)>0$ for all $i=1,\dots,N$ and $\tau$ is sufficiently large, then $\{\mathbf{x}(t)\}_{t=0}^{\infty}$ is irreducible and aperiodic. 
	\end{itemize}
\end{theorem}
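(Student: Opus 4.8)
The plan is to exploit the rank-one limit established in the preceding proposition: as $\tau\to\infty$, $\mathbf{P}_\tau\to\mathbf{P}_\infty:=\Theta^\top(\mathbf{1}_N^\top\otimes\theta_\star)$, whose $(i,j)$ entry is $\theta_j^\top\theta_\star=\tr(\sfM_{\itob{j}}\rho_\star)=p(\itob{j})$. Thus $\mathbf{P}_\infty$ has identical rows equal to the probability vector of measuring the steady state $\rho_\star$ in the computational basis. I would first verify this entry formula, noting that $\sum_{j=1}^N \tr(\sfM_{\itob{j}}\rho_\star)=\tr(\rho_\star)=1$ since $\{\sfM_{\itob{j}}\}_{j=1}^N$ is a complete set of orthogonal projectors, so $\mathbf{P}_\infty$ is a genuine stochastic matrix. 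Then the argument splits according to the two cases in the statement, each of which is a perturbation argument: since the relevant combinatorial properties (having a unique absorbing state, being irreducible and aperiodic) depend only on which entries of the transition matrix are positive versus zero, and since $\mathbf{P}_\tau\to\mathbf{P}_\infty$ entrywise, it suffices to establish the corresponding sign pattern for $\mathbf{P}_\infty$ and then choose $\tau$ large enough that the strictly positive entries of $\mathbf{P}_\infty$ remain strictly positive in $\mathbf{P}_\tau$.

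For part (ii), if $\tr(\sfM_{\itob{i}}\rho_\star)>0$ for all $i$, then every entry of $\mathbf{P}_\infty$ is strictly positive, so for $\tau$ large all entries of $\mathbf{P}_\tau$ are strictly positive; a Markov chain with an everywhere-positive transition matrix is trivially irreducible (take $l=1$ in the definition given in the preliminaries) and aperiodic (each $[\mathbf{P}_\tau]_{ii}>0$ forces period one). The only subtlety is making the "for $\tau$ sufficiently large" quantifier precise: set $\delta=\min_i \tr(\sfM_{\itob{i}}\rho_\star)>0$, and by entrywise convergence pick $T$ so that $\tau\ge T$ implies $\|\mathbf{P}_\tau-\mathbf{P}_\infty\|_\infty<\delta$, whence every entry of $\mathbf{P}_\tau$ exceeds $0$.

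For part (i), suppose $\rho_\star=\sfM_{\itob{i^\star}}$ for some index $i^\star\in\pzcV$. Since $\{\sfM_{\itob{j}}\}$ are mutually orthogonal rank-$1$ projectors, $\tr(\sfM_{\itob{j}}\rho_\star)=\tr(\sfM_{\itob{j}}\sfM_{\itob{i^\star}})=\delta_{j i^\star}$, so the common row of $\mathbf{P}_\infty$ is the unit vector $e_{i^\star}^\top$; in particular $[\mathbf{P}_\infty]_{i^\star i^\star}=1$. I would then argue that for $\tau$ large, the state $\itob{i^\star}$ is absorbing in the chain $\mathbf{P}_\tau$. The cleanest route: $[\mathbf{P}_\tau]_{i^\star i^\star}=\tr(\sfM_{\itob{i^\star}}\, e^{\scrL\tau}(\sfM_{\itob{i^\star}}))=\tr(\sfM_{\itob{i^\star}}\, e^{\scrL\tau}(\rho_\star))=\tr(\sfM_{\itob{i^\star}}\rho_\star)=1$ exactly, for every $\tau$, because $\rho_\star$ is a fixed point of the semigroup — so $\itob{i^\star}$ is in fact an absorbing state for all $\tau>0$, not merely in the limit. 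This is the key observation and it avoids any perturbation estimate for that entry. Uniqueness of the absorbing state then needs $\tau$ large: any absorbing state $\itob{j}$ would require $[\mathbf{P}_\tau]_{jj}=1$, hence $[\mathbf{P}_\tau]_{jk}=0$ for all $k\ne j$; but $[\mathbf{P}_\tau]_{j i^\star}\to [\mathbf{P}_\infty]_{j i^\star}$, and I need to show this limit, or at least $[\mathbf{P}_\tau]_{j i^\star}$ for large $\tau$, is positive when $j\neq i^\star$ — equivalently $\tr(\sfM_{\itob{i^\star}}\, e^{\scrL\tau}(\sfM_{\itob{j}}))>0$. The limit is $\tr(\sfM_{\itob{i^\star}}\rho_\star)=1>0$, so for $\tau$ large enough (uniformly over the finitely many $j$) this entry is positive, ruling out any other absorbing state.

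The main obstacle is the quantifier bookkeeping in part (i): one must be careful that "sufficiently large $\tau$" can be chosen once and for all (a single threshold $T$ working simultaneously for all $j\ne i^\star$), which is immediate since $\pzcV$ is finite, but should be stated explicitly. A secondary point worth a sentence is that in case (i) the set $\{\sfM_{\itob{i}}\}_{i=1}^N$ is precisely the set of density matrices of the basis states $\bra{v_{i_1}\cdots v_{i_n}}$, so the hypothesis $\rho_\star\in\{\sfM_{\itob{i}}\}$ is exactly the statement that the steady state is one of the pointer/basis states — and in that case the corresponding Boolean string $\itob{i^\star}$ is the natural candidate for the absorbing state, which the computation above confirms.
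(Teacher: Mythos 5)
Your proposal is correct and follows essentially the same route as the paper's own proof: both parts rest on the fixed-point identity $e^{\mathbf{W}\tau}\theta_\star=\theta_\star$ (giving $[\mathbf{P}_\tau]_{i^\star j}=\theta_j^\top\theta_\star=\delta_{ji^\star}$ exactly, for every $\tau$) together with the entrywise convergence $[\mathbf{P}_\tau]_{ij}\to\theta_j^\top\theta_\star=\tr(\sfM_{\itob{j}}\rho_\star)$ and finiteness of the state space to pick a single threshold for $\tau$. Your uniqueness argument in (i) and the positivity/irreducibility/aperiodicity argument in (ii) match the paper's, with only cosmetic differences in how the limiting matrix is packaged.
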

\begin{proof}
	With $\theta_{\star}$ being the coordinate of $\rho_\star$ under ${\bm\sigma}$, it is easy to verify that 
	$e^{\vW\tau}(\theta_\star)=\theta_\star$ for all $\tau$.
		
	(i) Suppose $\rho_\star\in\{\sfM_{\itob{i}},i\in\pzcV\}$. Then there exists $i_\star\in\pzcV$ such that $\rho_\star=\sfM_{\itob{i_\star}}$. From Theorem~\ref{thm:transition}, we know that the transition probability from $\itob{i}$ to $\itob{i_\star}$ is
	\begin{align*}
	\mathbb{P}\left(\vx(t+1)=\itob{i_\star}|\vx(t)=\itob{i}\right) 
	 = \theta_{\star}^\top e^{\vW\tau}\theta_i.
	\end{align*}
As a result, we can conclude that 
	\begin{align*}
	\lim_{\tau\rightarrow\infty}\mathbb{P}\left(\vx(t+1)=\itob{i_\star}|\vx(t)=\itob{i}\right) 
	= \lim_{\tau\rightarrow\infty}\theta_{\star}^\top e^{\vW\tau}\theta_i
	= \theta_{\star}^\top\theta_{\star} 
	> 0, 
	\end{align*}
	which implies that $$
	\mathbb{P}\left(\vx(t+1)=\itob{i_\star}|\vx(t)=\itob{i}\right)>0$$
	 for all $\itob{i}$ with sufficiently large $\tau$. Also, there holds for all $\tau>0$ that 
	\begin{align*}
	\mathbb{P}\left(\vx(t+1)=\itob{j}|\vx(t)=\itob{i_\star}\right)
	= \theta_{j}^\top e^{\vW\tau}\theta_{\star}
	 = \theta_{j}^\top \theta_{\star}
	 = 0 	
	 \end{align*}
	for any $j\in\pzcV\setminus\{i_\star\}$. Consequently, $i_\star$ is the unique absorbing state of the Markov chain $\{\mathbf{x}(t)\}_{t=0}^{\infty}$.

	(ii) Suppose $\tr(\sfM_{\itob{i}}\rho_\star)>0$ for all $i\in\pzcV$. Then 
\begin{align*}
\lim_{\tau\rightarrow\infty}\mathbb{P}\big(\vx(t+1)=\itob{i}\big|\vx(t)=\itob{j}\big)
= \theta_i^\top\theta_\star
=\tr(\sfM_{\itob{i}}\rho_\star) 
> 0
\end{align*}
for all $i,j\in\pzcV$. Similarly, this implies that for sufficiently large $\tau$,
	$$
	\mathbb{P}\big(\vx(t+1)=\itob{j}\big|\vx(t)=\itob{i}\big)>0 
	$$ 
	for all $i,j\in\pzcV$. It is then straightforward to verify that $\{\mathbf{x}(t)\}_{t=0}^{\infty}$ is irreducible and aperiodic.
\end{proof}
 
 \begin{remark}
 In Theorem \ref{thm2}, we require $\tau$ to be sufficiently large. This is due to the fact that the conclusions are built on the observation that $\rho(s)$ will be close to $\rho_\star$ as time increases. Since for relaxing open quantum systems, there holds at an exponential rate that (\cite{WangPRA})
 $$ 
 \lim_{s\to \infty}\rho(s)= \rho_\star,
 $$
 it is possible to determine an upper bound for the required $\tau$ in Theorem \ref{thm2}. 
 \end{remark}
For relaxing open quantum dynamics, the equilibrium $\rho_\star$ is in general a mixed state~\citep{WangPRA}. Consequently, unless the master equation has a very special structure, $\rho_\star \in \{{\sfM_{\itob{i}}}\}_{i=1}^N$ will not hold. Moreover, note that with relaxing quantum dynamics, in Theorem \ref{thm2}, the condition that $\tr(\sfM_{\itob{i}}\rho_\star)>0$ for all $i\in\pzcV$ holds in the generic sense, i.e., for any semigroup $\{e^{\scrL s}\}_{s\ge 0}$ from (\ref{eq:hybrid.dynamic}) that is relaxing, there holds $\tr(\sfM_{\itob{i}}\rho_\star)>0$ for all $i$ for almost all $\mathsf{M}^{\otimes n}$. This suggests that $\{\mathbf{x}(t)\}_{t=0}^{\infty}$ in general will be irreducible and aperiodic for $\tau$ sufficiently large. Then the following result establishes an asymptotic expression for the expected post-measurement quantum state $\mathbb{E}(\xi_\rmp(t))$. 
\begin{theorem}
	Suppose the induced Boolean network dynamics $\{\mathbf{x}(t)\}_{t=0}^{\infty} $ from the quantum system (\ref{eq:hybrid.dynamic})--(\ref{eq:hybird.initial}) admits a state-transition matrix $\mathbf{P}_\tau$ that is irreducible and aperiodic. Let $\pi_\tau^\ast$ be the corresponding stationary distribution. Then there holds for all $\rho(0)$ that 
	$$
	\lim_{t\rightarrow\infty}\mathbb{E}(\xi_\rmp(t))=
	\sum_{i=1}^{N} \left[\pi_\tau^\ast\right]_i (\theta_i\otimes I)\begin{bmatrix}
	\sigma_1 \\ \vdots \\ \sigma_{N^2}
	\end{bmatrix}. 
	$$
\end{theorem}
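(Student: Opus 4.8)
The plan is to combine the transition-matrix representation from Theorem~\ref{thm:transition} with the convergence of the Markov chain to its stationary distribution, and then unpack what the post-measurement state $\xi_\rmp(t)$ is in coordinates. First I would recall that, conditioned on $\vx(t)=\itob{i}$, the post-measurement state is exactly $\xi_\rmp(t)=\sfM_{\itob{i}}=\sum_{k=1}^{N^2}\theta_{i_k}\sigma_k$, as already established in the proof of Theorem~\ref{thm:transition}. Hence $\mathbb{E}(\xi_\rmp(t))=\sum_{i=1}^N \mathbb{P}(\vx(t)=\itob{i})\,\sfM_{\itob{i}} = \sum_{i=1}^N [\pi_t]_i \sum_{k=1}^{N^2}\theta_{i_k}\sigma_k$, where $\pi_t$ is the distribution of $\vx(t)$ and $\pi_0$ is determined by measuring $\rho(0)$, i.e.\ $[\pi_0]_i=\tr(\sfM_{\itob{i}}\rho(0))$. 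In compact form this is $\sum_{i=1}^N [\pi_t]_i (\theta_i\otimes I)\,[\sigma_1;\cdots;\sigma_{N^2}]$, treating the right-hand column as a formal vector of operators so that $(\theta_i\otimes I)$ contracts the coefficient vector $\theta_i$ against the basis $\{\sigma_k\}$.

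Next I would invoke the hypothesis that $\mathbf{P}_\tau$ is irreducible and aperiodic, so by the Markov-chain convergence result quoted in the Preliminaries, $\pi_t=\pi_0\mathbf{P}_\tau^t\to\pi_\tau^\ast$ as $t\to\infty$, for every initial distribution $\pi_0$ — in particular for the $\pi_0$ induced by any $\rho(0)$. Since the map $\pi\mapsto \sum_{i=1}^N [\pi]_i (\theta_i\otimes I)[\sigma_1;\cdots;\sigma_{N^2}]$ is linear (hence continuous) on the finite-dimensional simplex, I can pass the limit through:
\begin{align*}
\lim_{t\to\infty}\mathbb{E}(\xi_\rmp(t)) = \sum_{i=1}^N [\pi_\tau^\ast]_i (\theta_i\otimes I)\begin{bmatrix}\sigma_1\\ \vdots\\ \sigma_{N^2}\end{bmatrix},
\end{align*}
which is precisely the claimed identity. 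The independence from $\rho(0)$ is exactly the content of the convergence being to the same $\pi_\tau^\ast$ regardless of $\pi_0$.

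The steps are essentially routine; the only genuine care needed is (a) justifying the formula $\mathbb{E}(\xi_\rmp(t))=\sum_i [\pi_t]_i \sfM_{\itob{i}}$ by conditioning on $\vx(t)$ and using that $\xi_\rmp(t)$ is a deterministic function of $\vx(t)$, and (b) being explicit about the meaning of the operator-valued notation $(\theta_i\otimes I)[\sigma_1;\cdots;\sigma_{N^2}] = \sum_{k=1}^{N^2}\theta_{i_k}\sigma_k$, so that the matrix expression in the statement is unambiguous. I expect the main (minor) obstacle to be purely notational — making the tensor-with-identity bookkeeping precise — rather than mathematical, since the substantive input (convergence to $\pi_\tau^\ast$) is assumed in the hypothesis and the representation of $\xi_\rmp$ is already in hand from Theorem~\ref{thm:transition}.
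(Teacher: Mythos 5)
Your proposal is correct and follows essentially the same route as the paper's proof: identify $\xi_\rmp(t)=\sfM_{\itob{i}}=\sum_k\theta_{i_k}\sigma_k$ given $\vx(t)=\itob{i}$, write $\mathbb{E}(\xi_\rmp(t))$ as the $\pi_t$-weighted combination of these projectors, and use irreducibility and aperiodicity to send $\pi_t=\pi_0\mathbf{P}_\tau^t\to\pi_\tau^\ast$ independently of the initial distribution induced by $\rho(0)$. The only cosmetic difference is that the paper first takes the limit of the probabilities and then assembles the expectation, whereas you form the expectation at finite $t$ and pass the limit through by linearity.
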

\begin{proof} Recall that the coordinate of $\rho(0)$ under ${\bm\sigma}$ is denoted as $\theta_0$. Then the probability distribution of $\mathbf{x}(0)$ is 
	\begin{align*}
	\pi_0&=\left[\bbP(\vx(0)=\itob{1}),\dots,\bbP(\vx(0)=\itob{N})\right]\\
	&=[\theta_1^\top \theta_0,\dots,\theta_N^\top \theta_0],
	\end{align*}
	which satisfies 
	$\sum_{k=1}^{N} [\pi_0]_k=1$. This implies that 
	\begin{align*}
	\lim_{t\rightarrow\infty}\mathbb{P}(\mathbf{x}(t)=[i_1\dots i_n])
	&= \lim_{t\rightarrow\infty}\left[\pi_0\mathbf{P}_\tau^t\right]_{\btoi{i_1\dots i_n}}\\
	&= \left[\pi_0\lim_{t\rightarrow\infty}\mathbf{P}_\tau^t\right]_{\btoi{i_1\dots i_n}}\\
	&= \left[\pi_0 \mathbf{1}_N\pi_\tau\right]_{\btoi{i_1\dots i_n}}\\
	&= \left[\pi_\tau^\ast\right]_{\btoi{i_1\dots i_n}},
	\end{align*}
	where $\mathbf{1}_N=[1,\dots,1]^\top$.
We can then further conclude from the distribution of $\xi_\rmp(t)$ that
	\begin{align*}
	\lim_{t\rightarrow\infty}\mathbb{E}(\xi_\rmp(t))
	&= \lim_{t\rightarrow\infty}\sum_{i_1\dots i_n}\mathbb{P}(\mathbf{x}(t)=[i_1\dots i_n])\braket{v_{i_1}\cdots v_{i_n}}{v_{i_1}\cdots v_{i_n}}\\
	&=\sum_{i_1,\dots,i_n}\left[\pi_\tau^\ast \right]_{\btoi{i_1\dots i_n}}\braket{v_{i_1}\cdots v_{i_n}}{v_{i_1}\cdots v_{i_n}}\\
	&	=\sum_{i} \left[\pi_\tau^\ast\right]_i (\theta_i\otimes I)\begin{bmatrix}
	\sigma_1 \\ \vdots \\ \sigma_{N^2}
	\end{bmatrix}.
	\end{align*}
	We have now completed the proof of the desired conclusion. 
\end{proof}

Note that $\mathbb{E}(\xi_\rmp(t))$ is the classical expectation of $\xi(t)$, where the probability measure arises from the random measurement outcomes. Therefore, the implication is that the quantum pre-measurement state $\xi(t)$ tends to be stationary in terms of its measurement statistics. However, $\xi(t),t=0,1,2,\dots$ will undergo ergodic recursions among $N$ different states in the set
$$
\Big\{e^{-\mathscr{L}\tau} ({\sfM_{\itob{i}}}) \Big\}_{i=1}^N,
$$
instead of being convergent. 

\section{Non-Relaxing Quantum Dynamics: Quantum Consensus Networks}\label{sec6}
In this section, we turn our attention to non-relaxing quantum dynamics (\ref{eq:hybrid.dynamic}). It is clear that various types of master equations could lead to non-relaxing quantum dynamics. Instead of looking into the general form of (\ref{eq:hybrid.dynamic}), we discuss the quantum network dynamics (\ref{eq:hybrid.dynamic})--(\ref{eq:hybird.initial}) under the so-called consensus master equation \citep{shi2016}. On one hand, the consensus master equation \citep{shi2016} is potentially an important class of open quantum networks, analogous to the classical Laplacian consensus dynamics~\citep{Magnus}. On the other hand, how the simple yet rich structural effect in the consensus master equation \citep{shi2016} affects the sequential measurement outcomes is an interesting point for investigation. 

\subsection{Consensus Master Equation}
 A permutation of the qubit set $\mathrm{V}=\{1,\dots,n\}$ is a bijective map from $\mathrm{V}$ onto itself. We denote by $\chi$ such a permutation.
 Particularly, a permutation $\chi$ is called a swapping between $j$ and $k$, denoted by $\chi_{jk}$, if $\chi(j)=k$, $\chi(k)=j$, and $\chi(l)=l, l\in \mathrm{V}\setminus\{j,k\}$.
 The set of all permutations of $\mathrm{V}$ forms a group, called the $n$'th permutation group and denoted by $\Upsilon_n=\{\chi\}$. There are $n!$ elements in $\Upsilon_n$.

\begin{definition}
Let $\chi\in \Upsilon_n$. We define the unitary operator $\mathsf{U}_\chi$ over $\Hn$ induced by $\chi$, by
$$
\mathsf{U}_\chi \big(|q_1\rangle \otimes \dots \otimes |q_n\rangle \big)= |q_{\chi(1)}\rangle \otimes \dots \otimes |q_{\chi(n)}\rangle,
$$
where for $i=1,\dots,n$. Similarly, with slight abuse of notation, we define the action of $\chi$ over $\{0,1\}^{\otimes n}$ by
$$
\chi\big([i_1\dots i_n]\big)=[i_{\chi_1}\dots i_{\chi_n}]
$$ 
where $i_k\in\{0,1\}$ for all $k=1,\dots,n$. 
\end{definition}

Let the operator $\mathsf{U}_{\chi_{_{jk}}}$ be denoted as $\mathsf{U}_{jk}$ for the ease of presentation. Let the network interaction structure be described by an undirected and, without loss of generality, connected graph $\mathrm{G}=(\mathrm{V}, \mathrm{E})$. The so-called quantum consensus master equation is described by \citep{shi2016} 
\begin{align}\label{sys10}
\dot{\rho}(s)=\mathscr{L}(\rho(s))=\sum_{\{j,k\}\in \mathrm{E}} \alpha_{jk} \Big(\mathsf{U}_{jk}\rho \mathsf{U}_{jk}^\dag -\rho\Big), 
\end{align}
 where $\alpha_{jk}>0$ represents the weight of link $\{j,k\}$.

 Define an operator over the density operators of $\Hn$, $\mathscr{P}_\ast$, by
\begin{align}
\label{quantumconsensus}
\mathscr{P}_\ast (\rho)= \frac{1}{n!} \sum_{\chi \in \Upsilon_n} \mathsf{U}_\chi \rho \mathsf{U}^\dag_\chi.
\end{align}
It is known that when the graph $\mathrm{G}$ is connected, along the equation (\ref{sys10}) there holds
 \begin{align}
 \lim_{s\to\infty} \rho(s)= \mathscr{P}_\ast (\rho_0)
 \end{align}
with $\rho(0)=\rho_0$. Clearly the master equation (\ref{sys10}) is not relaxing as the limiting point depends on the initial quantum state. 

\subsection{State Transitions}
We are now in a place to study the quantum network dynamics (\ref{eq:hybrid.dynamic})--(\ref{eq:hybird.initial}) when the continuous quantum dynamics (\ref{eq:hybrid.dynamic}) is described by (\ref{sys10}).

Recall that $\sfP_m=\braket{v_m}{v_m}$ is the projector onto the eigenspace generated by the eigenvector $\bra{v_m} \in \HH_{2}$ with eigenvalue $\lambda_m$, $m\in\{0,1\}$. Let $\big\{|i^\sharp\rangle\langle j^\sharp|\big\}_{i,j=1}^{N^2}$ be a basis of $\calL(\HH_N)$, where by definition 
$$
i^\sharp:=v_{i_1}v_{i_2}\dots v_{i_n}
$$ 
with $\itob{i}=[i_1,\dots i_n]$. According to the definition of $\mathsf{U}_\chi$, we can verify that 
\begin{align}
\mathsf{U}_\chi \big( |i^\sharp\rangle\big)= \big| v_{i_{\chi(1)}}v_{i_{\chi(2)}}\dots v_{i_{\chi(n)}} \big\rangle.
\end{align}
As a result, under the basis of $\big\{|i^\sharp\rangle \big\}_{i=1}^{N}$, the matrix representation of $\mathsf{U}_\chi$, denoted ${U}_\chi$, is a real permutation matrix for any $\chi\in\Upsilon_n$. Similarly, we denote ${U}_{jk}$ as the matrix representation of the operator $\mathsf{U}_{jk}$ under the basis $\big\{|i^\sharp\rangle \big\}_{i=1}^{N}$.
\begin{definition}{\normalfont\citep{shi2016}}
The quantum Laplacian of $\mathrm{G}$ is defined as 
$$
L_{\rm q}(\mathrm{G}):=-\sum_{\{j,k\}\in\mathrm{E}}\alpha_{jk}\big( {U}_{jk} \otimes {U}_{jk}-I \big). 
$$
\end{definition}

Let ${e}_i$ denote the $N\times 1$ unit vector with the $i$-th entry being one and all other entries being zero. Then we can establish the following result. 

\begin{proposition}\label{prop2}
Consider (\ref{eq:hybrid.dynamic})--(\ref{eq:hybird.initial}) with (\ref{eq:hybrid.dynamic}) being described by the quantum consensus master equation (\ref{sys10}) under qubit interaction graph $\mathrm{G}$. Define
$$
E_N:=\left[e_1\otimes e_1 ,\dots,e_N \otimes e_N\right].
$$ 
Then there holds for the $\{\mathbf{x}(t)\}_{t=0}^{\infty} $ under the measurement $\mathsf{M}^{\otimes n}$ that 
$$
\mathbf{P}_\tau =E_N^\top e^{-\tau L_{\rm q}(\mathrm{G})} E_N. 
$$
\end{proposition}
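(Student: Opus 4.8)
The plan is to recognize the consensus master equation (\ref{sys10}) as a Lindblad equation with a single collection of Lindblad operators $\sfV_{\{j,k\}}=\sqrt{\alpha_{jk}}\,\sfU_{jk}$ and no Hamiltonian, and then to track how the special structure of $\sfM^{\otimes n}$ interacts with the permutation-matrix representation of the $\sfU_{jk}$. The key observation is that the post-measurement states $\sfM_{\itob{i}}=\braket{v_{i_1}\cdots v_{i_n}}{v_{i_1}\cdots v_{i_n}}=\braket{i^\sharp}{i^\sharp}$ are exactly the rank-one projectors onto the basis vectors $\bra{i^\sharp}$, so the whole chain lives on the real span of $\{\braket{i^\sharp}{i^\sharp}\}_{i=1}^N$. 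Under the vectorization that sends $\braket{p^\sharp}{q^\sharp}\mapsto e_p\otimes e_q$ (the standard ``realification'' used to pass to (\ref{eq:lindblad-real}), here taken over the $\{|i^\sharp\rangle\}$ basis rather than the Hermitian $\bm\sigma$ basis), the operator $\rho\mapsto \sfU_{jk}\rho\sfU_{jk}^\dag$ becomes $U_{jk}\otimes U_{jk}$, so $\mathscr{L}$ realifies to $-L_{\rm q}(\mathrm{G})=-\sum_{\{j,k\}}\alpha_{jk}(U_{jk}\otimes U_{jk}-I)$, and $\braket{i^\sharp}{i^\sharp}$ maps to $e_i\otimes e_i$, i.e.\ the $i$-th column of $E_N$.

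The steps, in order: First I would write $\xi_\rmp(t)=\braket{i^\sharp}{i^\sharp}$ when $\vx(t)=\itob{i}$, exactly as in the proof of Theorem~\ref{thm:transition}, so that the coordinate vector of $\xi_\rmp(t)$ under $\{|p^\sharp\rangle\langle q^\sharp|\}$ is $E_N e_i = e_i\otimes e_i$. Second, I would verify that the semigroup $e^{\scrL\tau}$ acts on these coordinates as $e^{-\tau L_{\rm q}(\mathrm{G})}$; this is immediate once one checks that the realification of $\rho\mapsto \sum_{\{j,k\}}\alpha_{jk}(\sfU_{jk}\rho\sfU_{jk}^\dag-\rho)$ on the chosen basis is precisely $-L_{\rm q}(\mathrm{G})$, using $\mathrm{vec}(A\rho B^\top)=(A\otimes B)\,\mathrm{vec}(\rho)$ together with the fact that $\sfU_{jk}$ is real (so $\sfU_{jk}^\dag$ has matrix $U_{jk}^\top$, and $\rho\mapsto \sfU_{jk}\rho\sfU_{jk}^\dag$ vectorizes to $U_{jk}\otimes U_{jk}$). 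Third, I would compute the transition probability: given $\vx(t)=\itob{i}$, the coordinate vector of $\xi(t+1)=e^{\scrL\tau}(\xi_\rmp(t))$ is $e^{-\tau L_{\rm q}(\mathrm{G})}(e_i\otimes e_i)$, and the probability of outcome $\itob{j}$ is $\mathrm{Tr}(\sfM_{\itob{j}}\,\xi(t+1))=\mathrm{Tr}(\braket{j^\sharp}{j^\sharp}\,\xi(t+1))=\langle j^\sharp|\xi(t+1)|j^\sharp\rangle$, which in coordinates is exactly $(e_j\otimes e_j)^\top e^{-\tau L_{\rm q}(\mathrm{G})}(e_i\otimes e_i)$. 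Assembling over $i,j$ gives $\vP_\tau=E_N^\top e^{-\tau L_{\rm q}(\mathrm{G})}E_N$.

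The one genuine subtlety — and the step I expect to be the main obstacle — is bookkeeping the two different realifications consistently: Theorem~\ref{thm:transition} was stated using the orthonormal Hermitian basis $\bm\sigma$ and the map $e^{\mathbf{W}\tau}$, whereas $L_{\rm q}(\mathrm{G})$ is defined via the (non-Hermitian, and generally non-orthonormal-as-used) operator basis $\{|i^\sharp\rangle\langle j^\sharp|\}$. I would either (a) redo the computation from scratch in the $|i^\sharp\rangle\langle j^\sharp|$ picture, noting that this basis is orthonormal with respect to the Hilbert--Schmidt inner product so that $\mathrm{Tr}((|p^\sharp\rangle\langle q^\sharp|)^\dag |r^\sharp\rangle\langle s^\sharp|)=\delta_{pr}\delta_{qs}$ and the coordinate-extraction $\mathrm{Tr}(\sfM_{\itob{j}}\,\cdot)$ is just reading off the $(j,j)$ coordinate — which makes the argument self-contained and avoids any change-of-basis matrix — or (b) exhibit the explicit real orthogonal change of basis between $\bm\sigma$ and the symmetrized/antisymmetrized $|i^\sharp\rangle\langle j^\sharp|$ and show it conjugates $\mathbf{W}$ (restricted appropriately) to $-L_{\rm q}(\mathrm{G})$ and $\Theta$ to $E_N$. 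Option (a) is cleaner, and it also makes transparent why only the ``diagonal'' vectors $e_i\otimes e_i$ ever appear: the chain never leaves the classical simplex embedded in operator space, which is precisely the structural content of the proposition.
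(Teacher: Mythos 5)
Your proposal is correct and follows essentially the same route as the paper: the paper likewise reduces Proposition~\ref{prop2} to a rerun of the proof of Theorem~\ref{thm:transition} after observing that, in the $\{|i^\sharp\rangle\langle j^\sharp|\}$ basis, the consensus master equation realifies to $\frac{d}{ds}\mathrm{vec}([\rho_{ij}])=-L_{\rm q}(\mathrm{G})\mathrm{vec}([\rho_{ij}])$ and that $\sfM_{\itob{i}}=|i^\sharp\rangle\langle i^\sharp|$ has coordinate vector $e_i\otimes e_i$ (your option (a)). The only nitpick is your vectorization identity $\mathrm{vec}(A\rho B^\top)=(A\otimes B)\mathrm{vec}(\rho)$, which under the paper's column-stacking convention should read $(B\otimes A)\mathrm{vec}(\rho)$ --- harmless here since $A=B=U_{jk}$.
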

Proposition \ref{prop2} shows that the exponential of the quantum Laplacian directly characterizes the state transition matrix of the induced probabilistic Boolean dynamics $\{\mathbf{x}(t)\}_{t=0}^{\infty} $. 
The proof of Proposition \ref{prop2} follows from a similar process as the proof of Theorem \ref{thm:transition}, where we only need to notice the following two points:
\begin{itemize}
\item[(i)] Under the basis $\big\{|i^\sharp\rangle\langle j^\sharp|\big\}_{i,j=1}^{N^2}$ for $\calL(\HH_N)$, the density operator $\rho$ can be represented as 
$$
\rho=\sum_{i,j=1}^{N^2} \rho_{ij}|i^\sharp\rangle\langle j^\sharp|. 
$$
As a result, the coordinate vector of $\rho$ is precisely ${\rm vec}([\rho_{ij}])$. Then the consensus master equation (\ref{sys10}) can be written as 
\begin{align}
\frac{d}{ds}{\rm vec}([\rho_{ij}(s)])= -L_{\rm q}(\mathrm{G}){\rm vec}([\rho_{ij}(s)]). 
\end{align}
Here for any given matrix $M\in \mathbb{C}^{m\times n}$, the vectorization of $M$, denoted by ${\rm vec}(M)$,
is the $mn\times 1$ column vector $([M]_{11}, \dots, [M]_{m1}, \dots, [M]_{1n},\dots, [M]_{mn})^\top$ \citep{Horn}.
\item[(ii)] Under the basis $\big\{|i^\sharp\rangle\langle j^\sharp|\big\}_{i,j=1}^{N^2}$, there holds 
$$
\sfM_{\itob{i}}=\mathsf{P}_{i_1}\otimes\dots\otimes \mathsf{P}_{i_n}= \big|\btoi{i_1\cdots i_n}^\sharp \big \rangle \big\langle \btoi{i_1\cdots i_n}^\sharp \big|. 
$$
\end{itemize}

We also note a critical observation from Proposition \ref{prop2} that the representation of $\mathbf{P}_\tau$ is independent of the choice of $\mathsf{M}$ in the network observable $\mathsf{M}^{\otimes n}$, in sharp contrast with Theorem \ref{thm:transition}. The consensus master equation has inherent symmetry, where the selection of the basis $\big\{|i^\sharp\rangle\langle j^\sharp|\big\}_{i,j=1}^{N^2}$ preserves the representation $\mathbf{P}_\tau$ under different measurement bases. Additionally, it is obvious that $\mathbf{P}_\tau$ is a symmetric matrix.

\subsection{Communication Classes}

For the Markov chain $\{\mathbf{x}(t)\}_{t=0}^{\infty} $, a state $[p_1 \dots p_n]$ in its state space is said to be {\it accessible} from state $[q_1 \dots q_n]$ if there is a nonnegative integer $t$ such that $$
\mathbb{P}\big(\mathbf{x}_t=[p_1 \dots p_n]\ \big|\ \mathbf{x}_0=[q_1 \dots q_n]\big)>0. 
$$ It is termed that $[p_1 \dots p_n]$ {\it communicates} with state $[q_1 \dots q_n]$ if $[p_1 \dots p_n]$ and $[q_1 \dots q_n]$ are accessible from each other.
This communication relationship forms an equivalence relation among the states in $\{0,1\}^n$. The equivalence classes of this relation are called {\it communication classes} of the chain $\{\mathbf{x}(t)\}_{t=0}^{\infty} $. The following theorem provides a full characterization to the communication classes of $\{\mathbf{x}(t)\}_{t=0}^{\infty} $ under the consensus master equation.

\begin{theorem}\label{thm4}
Consider (\ref{eq:hybrid.dynamic})--(\ref{eq:hybird.initial}) with (\ref{eq:hybrid.dynamic}) being described by the quantum consensus master equation (\ref{sys10}) under qubit interaction graph $\mathrm{G}$. Then the following statements hold for the $\{\mathbf{x}(t)\}_{t=0}^{\infty} $ under the measurement $\mathsf{M}^{\otimes n}$.
\begin{itemize}
\item[(i)] There are $n+1$ different communication classes. 
\item[(ii)] For any $g=[g_1,\dots,g_n]\in\{0,1\}^n$, the communication class containing $g$ is given by
\begin{align}
\mathcal{C}_g=\Big\{\chi (g):\chi\in\Upsilon_n \Big\}. 
\end{align}
\item[(iii)] The number of states in $\mathcal{C}_g$ is $$
\binom{n}{|g|}=\frac{n!}{|g|!(n-|g|)!}
$$ 
with $|g|=\sum_{i=1}^n g_i$. 
\end{itemize}
\end{theorem}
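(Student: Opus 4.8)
The plan is to leverage Proposition \ref{prop2}, which tells us that $\mathbf{P}_\tau = E_N^\top e^{-\tau L_{\rm q}(\mathrm{G})} E_N$, and to convert the question about communication classes into a question about which entries $[\mathbf{P}_\tau^t]_{ij}$ can be made strictly positive. Since $\tau>0$ and $L_{\rm q}(\mathrm{G})$ arises as a (negative) generator of a continuous-time semigroup built from permutation matrices $U_{jk}\otimes U_{jk}$, the key structural fact to isolate first is that $e^{-\tau L_{\rm q}(\mathrm{G})}$ acting on the coordinate vector $e_i\otimes e_i$ stays supported on the ``diagonal'' vectors $\{e_k\otimes e_k\}$ only after projection by $E_N^\top$; more usefully, I would unpack the action of each generator term $\alpha_{jk}(U_{jk}\otimes U_{jk}-I)$ on $e_i\otimes e_i$ and observe that $U_{jk}(e_{i^\sharp}) = e_{(\chi_{jk}(i))^\sharp}$, so the swap operators permute the index $i\in\pzcV$ exactly according to the induced action of $\chi_{jk}$ on $\{0,1\}^n$ via $\itob{\cdot}$. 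This is the bridge: the off-diagonal structure of $\mathbf{P}_\tau$ is governed by the orbit structure of the group generated by the swaps $\{\chi_{jk}:\{j,k\}\in\mathrm{E}\}$ acting on $\{0,1\}^n$.

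The main steps, in order, are as follows. First, I would show $[\mathbf{P}_\tau]_{ij}>0$ if and only if there is a path in $\mathrm{G}$ realizing a permutation $\chi$ with $\chi(\itob{i})=\itob{j}$ — one direction from expanding $e^{-\tau L_{\rm q}}$ as a convergent series in products of the $U_{jk}\otimes U_{jk}$ with positive coefficients (so a term connecting $i$ to $j$ forces positivity, no cancellation being possible because all matrices involved are entrywise nonnegative after absorbing the $-I$ into the semigroup), and the other direction from the fact that if no such $\chi$ exists then that matrix entry is identically zero for every $\tau$. Second, I would invoke the standard fact (since $\mathrm{G}$ is connected) that the swaps $\{\chi_{jk}:\{j,k\}\in\mathrm{E}\}$ generate the full symmetric group $\Upsilon_n$ — this is the classical result that transpositions along the edges of a connected graph generate $S_n$. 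Combining these two steps gives that $\itob{i}$ and $\itob{j}$ communicate if and only if $\itob{j}\in\{\chi(\itob{i}):\chi\in\Upsilon_n\}$, which is exactly statement (ii). Statement (iii) then follows because the orbit of $g$ under the full symmetric group acting by coordinate permutation is precisely the set of binary strings with the same number of ones as $g$, and there are $\binom{n}{|g|}$ of those; statement (i) follows since $|g|$ ranges over $\{0,1,\dots,n\}$, giving $n+1$ orbits.

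The part requiring the most care — the main obstacle — is the first step, specifically the ``only if'' direction: showing that if $\itob{j}$ is not in the $\Upsilon_n$-orbit of $\itob{i}$ then $[\mathbf{P}_\tau^t]_{ij}=0$ for all $t$, and conversely that within an orbit all states genuinely communicate at finite time. The positivity (``if'') direction is cleaner because $e^{-\tau L_{\rm q}(\mathrm{G})} = \lim_{m\to\infty}\big(I + \tfrac{\tau}{m}\sum_{\{j,k\}}\alpha_{jk}(U_{jk}\otimes U_{jk} - I)\big)^m$ can be arranged, for large $m$, as a power of an entrywise-nonnegative matrix, so no sign cancellations occur and any ``reachable'' index pair gets strictly positive weight; one must also check the $E_N^\top(\cdot)E_N$ sandwiching does not kill the relevant entry, which it does not because $E_N^\top(e_k\otimes e_k) = e_k$. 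For the ``only if'' direction, the cleanest route is to note that each generator term maps $\mathrm{span}\{e_i\otimes e_i : \itob{i}\in\mathcal{C}_g\}$ into itself (since swaps preserve orbits), hence so does $e^{-\tau L_{\rm q}(\mathrm{G})}$, hence $[\mathbf{P}_\tau]_{ij}=0$ whenever $\itob{i}$ and $\itob{j}$ lie in different orbits; iterating gives the same for $\mathbf{P}_\tau^t$. I would also need a brief remark that within each orbit the chain is actually irreducible for $\tau$ large (or for suitable $t$), but this follows from the positivity direction applied to every pair in the orbit together with connectivity of the orbit as a graph under edge-swaps. One should double-check the edge case $n=1$ (or $|g|\in\{0,n\}$), where an orbit is a singleton and the ``communication'' is just the trivial self-loop, which is consistent since $[\mathbf{P}_\tau]_{ii}>0$ always.
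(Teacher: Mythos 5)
Your proposal is correct and follows the same overall strategy as the paper: identify the communication classes with the orbits of the group generated by the edge transpositions $\{\chi_{jk}:\{j,k\}\in\mathrm{E}\}$ acting on $\{0,1\}^n$, prove that reachability is confined to an orbit, prove strict one-step positivity within an orbit, and then use the classical fact that edge transpositions of a connected graph generate $\Upsilon_n$ to conclude that the orbits are exactly the Hamming-weight classes, giving (i)--(iii). Where you differ is in the two supporting lemmas. For the ``only if'' direction the paper proves (its Lemma \ref{lem1}) that the operator subspace $\HH_g={\rm span}\{\mathsf{U}_{\chi}|g\rangle\langle g|\mathsf{U}_{\chi}^\dag:\chi\in\Upsilon_n\}$ is invariant under the continuous master-equation flow, via a derivative-of-distance-to-subspace argument; you instead observe directly that each generator $\alpha_{jk}(U_{jk}\otimes U_{jk}-I)$ preserves ${\rm span}\{e_i\otimes e_i:\itob{i}\in\mathcal{C}_g\}$, which is the same invariance statement read off in the realified coordinates of Proposition \ref{prop2} and is arguably more economical. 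For the ``if'' direction the paper proves (its Lemma \ref{lem2}) entrywise positivity of $e^{-\tau L}$ for a connected classical Laplacian by tracking the Laplacian consensus flow started from a unit vector, and applies it to the induced subgraph of the generalized graph on the orbit; you argue via a nonnegative expansion of the exponential. Both work; the one place your write-up should be tightened is the Trotter-limit phrasing, since a limit of matrices with positive $(i,j)$ entries need not itself have a positive $(i,j)$ entry. Writing $L_{\rm q}(\mathrm{G})=cI-A$ with $c=\sum_{\{j,k\}\in\mathrm{E}}\alpha_{jk}$ and $A=\sum_{\{j,k\}\in\mathrm{E}}\alpha_{jk}\,U_{jk}\otimes U_{jk}\ge 0$ entrywise, so that $e^{-\tau L_{\rm q}(\mathrm{G})}=e^{-\tau c}\,e^{\tau A}=e^{-\tau c}\sum_{k\ge 0}\tau^{k}A^{k}/k!$, makes the positivity of every orbit-internal entry immediate and cancellation-free, and also yields $[\mathbf{P}_\tau]_{ii}>0$ from the $k=0$ term, covering your edge case.
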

\begin{proof} We first establish two technical lemmas as preliminaries of the proof. The proofs of the lemmas can be found in the appendix. 
\begin{lemma}\label{lem1}
For any $g=[g_1,\dots,g_n]\in\{0,1\}^n$, the space 
\begin{align}
\HH_g={\rm span}\Big\{\mathsf{U}_{\chi} |g\rangle \langle g| \mathsf{U}_{\chi}^\dag:\chi\in\Upsilon_n \Big\}
\end{align}
is invariant under the quantum consensus master equation (\ref{sys10}). 
\end{lemma}

\begin{lemma}\label{lem2}
Let $L(\mathrm{G})$ be the classical Laplacian associated with an undirected and connected graph $\mathrm{G}=(\mathrm{V},\mathrm{E})$. Then 
\begin{align}
e^{-\tau L(\mathrm{G})}>0
\end{align}
holds entrywise, i.e., $e^{-\tau L(\mathrm{G})}$ is a positive matrix, for all $\tau>0$. 
\end{lemma}

We are now in a place to prove the theorem. Suppose $\mathbf{x}(t)=g=[g_1,\dots,g_n]$. Then the post-measurement state
at step $t$ is $$
\xi_{\rm p}(t)= |g\rangle \langle g|. 
$$
For the continuous time interval $s\in[t\tau,(t+1)\tau)$, Lemma \ref{lem1} suggests that there holds
\begin{align}\label{101}
\rho(s)\in \HH_g
\end{align}
for all $s\in[t\tau,(t+1)\tau)$. As a result, we can establish the following two aspects. 

\begin{itemize}
\item [(a)] The accessible states from $g$ are in the set 
$
\mathcal{C}_g.
$
In fact, if $g'\notin\mathcal{C}_g $, then 
\begin{align}
{\rm Tr} \Big(|g'\rangle \langle g'| \mathsf{U}_{\chi} |g\rangle \langle g| \mathsf{U}_{\chi}^\dag \Big)=0
\end{align}
for all $\chi\in\Upsilon_n$ as $|g'\rangle$ and $\mathsf{U}_{\chi} |g\rangle$ must be two orthogonal elements in the basis $\big\{|i^\sharp\rangle\big\}_{i=1}^{N}$. Then from (\ref{101}), 
$$
{\rm Tr} \Big(|g'\rangle \langle g'|\rho(((t+1)\tau)^-) \Big)=0. 
$$Therefore, any accessible state from $g$ belongs to 
$\mathcal{C}_g$.

\item[(b)] Any state in $\mathcal{C}_g$ is indeed accessible from $g$. We can establish this point by focusing our analysis on the quantum Laplacian $L_{\rm q}(\mathrm{G})$. This quantum Laplacian, is the Laplacian of a generalized graph $\mathpzc{G}=(\mathpzc{V},\mathpzc{E})$ \citep{shi2016}. Let $\mathpzc{V}^\ast$ be defined by 
$$
\mathpzc{V}^\ast:=\Big\{\btoi{\chi(g)}:\chi\in\Upsilon_n \Big\}.
$$
The induced graph of $
\mathpzc{V}^\ast$ over $\mathpzc{G}$, $\mathpzc{G}|_{\mathpzc{V}^\ast}$, is obviously a connected subgraph from the quantum consensus master equation (\ref{sys10}).

 Then for the quantum Laplacian $L_{\rm q}(\mathrm{G})$ associated with the qubit interaction graph $\mathrm{G}=(\mathrm{V},\mathrm{E})$, there holds 
\begin{align}\label{102}
&\mathbb{P}(\mathbf{x}(t+1)= \chi(g)|\mathbf{x}(t)=g)\nonumber\\
&={\rm Tr} \Big(|\chi(g)\rangle \langle \chi(g)|\rho(((t+1)\tau)^-) \Big)\nonumber\\
&=(\btoi{\chi(g)}_N \otimes \btoi{\chi(g)}_N)^\top e^{-\tau L_{\rm q}(\mathrm{G})} (\btoi{g}_N \otimes \btoi{g}_N)\nonumber\\
&= \Big[e^{-\tau L(\mathpzc{G}|_{\mathpzc{V}^\ast}) }\Big]_{\btoi{\chi(g)}\btoi{g}}
\nonumber\\
&>0
\end{align}
where the first identity follows from the quantum measurement postulate, the second identity follows from the facts that 
$$
{\rm Tr}(A^\top B)=\sum_{i,j=1}^N A_{ij}B_{ij}= {\rm vec}(A)^\top{\rm vec}(B)
$$
for two matrices $A$ and $B$, and that $|\chi(g)\rangle \langle \chi(g)|$ is symmetric as a projector, and the last inequality holds from Lemma \ref{lem2}. This shows that the states in
$
\mathcal{C}_g
$
are all accessible from $g$. 
\end{itemize} 

Combining the two aspects, we can conclude that for the chain $\{\mathbf{x}(t)\}_{t=0}^{\infty} $, the communication class that a state $g$ is in, precisely $\mathcal{C}_g$. This establishes the statement (ii).

A further look of $\mathcal{C}_g$ leads to the immediate observation that for any $g\in\{0,1\}^{\otimes n}$, there holds
\begin{align} 
\mathcal{C}_g=\Big\{[i_1 \dots i_n]:\ \sum_{k=1}^n i_k =|g|\Big\}.
\end{align}
The statements (i) and (iii) thus follow from a basic analysis of combinatorics. We have now completed the proof of the desired theorem. 
\end{proof}

Theorem \ref{thm4} is closely related to the notion of generalized graph of the quantum interaction graph introduced in \cite{shi2016}. The generalized graph is the graph that is consistent with the quantum Laplacian, where for an $n$-qubit network, its generalized graph contains $N^2=4^n$ nodes. Particularly, the communication class $\mathcal{C}_g$ essentially coincides with the connected components of the $N$ nodes in the generalized graph corresponding to the diagonal entry of the network density operator. 

\subsection{Example}
We now present a concrete example as an illustration of the established results for quantum consensus networks with sequential measurements. We consider three qubits indexed by $1$, $2$, and $3$. The qubit interaction graph $\mathrm{G}=(\mathrm{V},\mathrm{E})$ is assumed to be a path graph as shown in Figure~\ref{fig1}. 


\begin{figure}[h] 
\centering
\includegraphics[width=3.2cm]{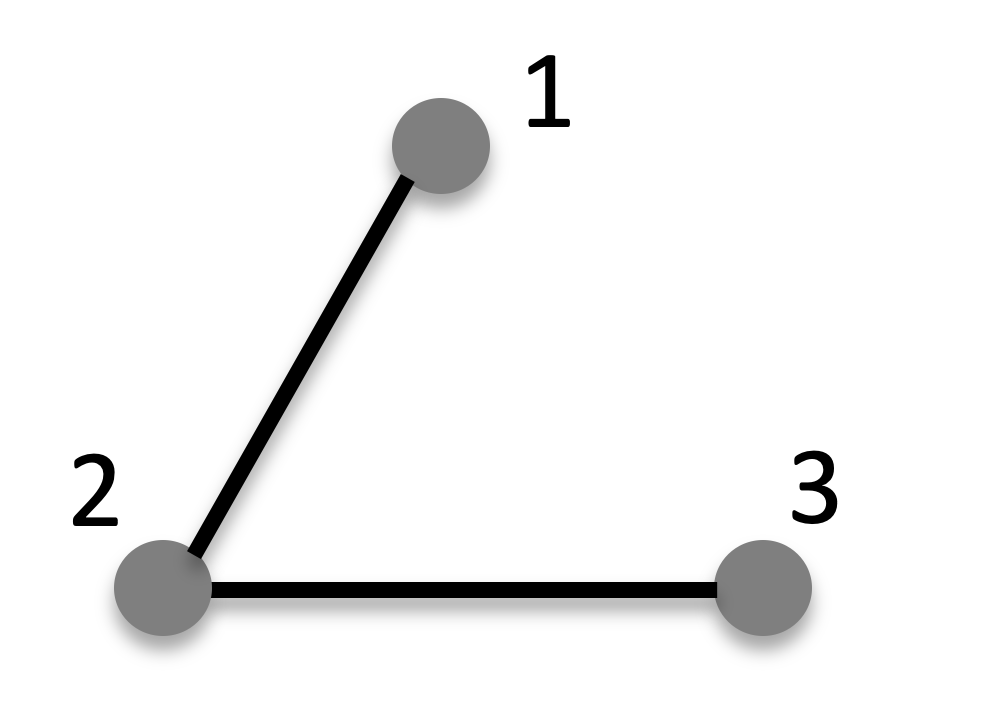}
\caption{The interaction graph for the three-qubit network. }\label{fig1}
\end{figure}

The quantum Laplacian $L_{\rm q}(\mathrm{G})$ for this graph is a $64\times 64$ matrix. Let the measurement $\mathsf{M}$ be taken under the standard computational basis, without loss of generality, i.e.,
\begin{align}
\mathsf{M}=\lambda_0 \braket{0}{0}+\lambda_1 \braket{1}{1},
\end{align}
and the resulting network measurement is $\mathsf{M}^{\otimes 3}$. Let the continuous quantum state follow the evolution described by the quantum consensus master equation (\ref{sys10}) with two swapping operators $\mathsf{U}_{12}$ and $\mathsf{U}_{23}$ as specified from the interaction graph $\mathrm{G}$. Let the measurement $\mathsf{M}^{\otimes 3}$ be carried out periodically with inter-measurement time $\tau=1$. The measurement outcome for the $s$'th measurement is recorded as $\mathbf{x}(t)\in\{0,1\}^3$. Then we can verify the following aspects. 

(i) The state transition matrix of the chain $\mathbf{x}(t)$ is given by
\begin{align}
\mathbf{P}_{\tau} =E_8^\top e^{-L_{\rm q}(\mathrm{G})} E_8\nonumber
&= \begin{pmatrix}
1 & 0 & 0 & 0 & 0 & 0 & 0 & 0\\
0 & 0.525 & 0.317 & 0 & 0.158 & 0 & 0 & 0\\
0 & 0.317& 0.366& 0& 0.317& 0 & 0 & 0\\
0 & 0 & 0 & 0.525 & 0 & 0.317 & 0.158 & 0\\
0 & 0.158 & 0.317 & 0 & 0.525 & 0 & 0 & 0\\
0 & 0 & 0 & 0.317 & 0 & 0.366& 0.317 & 0\\
0 & 0 & 0 & 0.158 & 0 & 0.317 & 0.525 & 0\\
0 & 0 & 0& 0 & 0 & 0 & 0 & 1 
\end{pmatrix}.\nonumber
\end{align}

(ii) Let $g=[001]$. Note that $\btoi{001}=2$. From the second row of $\mathbf{P}_{\tau}$, clearly the three nonzero entries are $[\mathbf{P}_{\tau}]_{22}>0$, $[\mathbf{P}_{\tau}]_{23}>0$ and $[\mathbf{P}_{\tau}]_{25}>0$. Consequently, the 
 states that are accessible from $g$ are $[001]=\itob{2},$ $[010]=\itob{3}$, and $[100]=\itob{5}$.
 
 On the other hand, we can verify directly that 
 \begin{align}
\mathcal{C}_g=\big\{\chi (g):\chi\in\Upsilon_3 \big\}=\big\{[001],[010],[100]\big\}
\end{align} 
which is consistent with the communication class that we established directly from $\mathbf{P}_{\tau}$. This is a validation of Theorem \ref{thm4}.(ii) above.

\begin{figure}[h] 
\centering
\begin{tikzpicture}[auto,node distance=2cm]
\tikzstyle{state}=[draw=blue,thick,minimum size=6mm]
\node[state] (000) {$[000]$};
\node[state,right of=000] (001) {$[001]$};
\node[state,right of=001] (010) {$[010]$};
\node[state,right of=010] (011) {$[011]$};
\node[state,below of=000,yshift=.5cm] (100) {$[100]$};
\node[state,right of=100] (101) {$[101]$};
\node[state,right of=101] (110) {$[110]$};
\node[state,right of=110] (111) {$[111]$};
\draw[<->,>=stealth,color=red,thick] (001) -- (010);
\draw[<->,>=stealth,color=red,thick] (010) -- (100);
\draw[<->,>=stealth,color=red,thick] (100) -- (001);
\draw[<->,>=stealth,color=cyan,thick] (101) -- (110);
\draw[<->,>=stealth,color=cyan,thick] (110) -- (011);
\draw[<->,>=stealth,color=cyan,thick] (011) -- (101);
\end{tikzpicture}
\caption{The state transition map from $\mathbf{P}_{\tau}$ for the measurement outcomes $\mathbf{x}(t)$. }\label{fig3}
\end{figure}
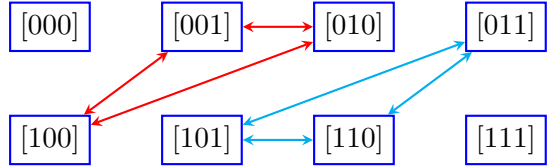

(iii) We can also establish from $\mathbf{P}_{\tau}$ (see the resulting state transition map in Figure \ref{fig3}) that the communication classes of $\mathbf{x}(t)$ are 
\begin{align*}
&\big\{ [000] \big\}; \nonumber\\
&\big\{[001],[010],[100]\big\}; \nonumber\\
&\big\{[011],[101],[110]\big\}; \nonumber\\
&\big\{[111]\big\}.
\end{align*}
The number of the communication classes and the size of each communication class are clearly consistent with Theorem \ref{thm4}.(i) and Theorem \ref{thm4}.(iii).

\section{Conclusions}\label{sec7}
 Open quantum networks, as a proven resource for universal quantum computation, are networked quantum subsystems such as qubits with the interconnections established by local environments. Their state evolutions can be described by structured master equations, and in the presence of sequential quantum measurements, the network states undergo random jumps with the measurement outcomes form a probabilistic Boolean network. We showed that the the state transition of the random measurement outcomes can be explicitly represented from the master equation. It was also shown that structural properties including absorbing states, reducibility, and periodicity for the induced Boolean dynamics can be made clear directly when the quantum dynamics is relaxing. For quantum consensus networks as a type of non-relaxing open quantum network dynamics, we showed that the communication classes of the measurement-induced Boolean networks arise from the quantum Laplacian of the underlying interaction graph.

\section*{Appendix} 
\subsection*{A. Proof of Lemma \ref{lem1}}
Let $\rho(s)$ be evolving along the quantum consensus master equation (\ref{sys10}) with $\rho(0)\in\HH_g$. We carry out the analysis under the basis $\big\{|i^\sharp\rangle\big\}_{i=1}^{N}$ for $\HH_N$, and therefore each density operator can be understood as its matrix representation with slight abuse of notation.

Introduce
\begin{align}
h_g(s)=\frac{1}{2}\big\| \rho(s) \big\|_{\HH_g}^2
\end{align}
where $\|a\|_{\HH_g}$ stands for distance between a point $a$ and the space ${\HH_g}$ under the $2$-norm. Let $\rho(0)=g$. Noting the fact that $\HH_g$ is linear subspace in $\mathcal{L}(\HH_N)$, we obtain 
\begin{align}
&\frac{d}{ds} h_g(s)\Big|_{s=0}\nonumber\\
&=\Big\langle {\rm vec}(\rho(s))- {\rm vec}(\mathscr{P}_{\HH_g}(\rho(s))), {\rm vec}(\dot\rho(s))\Big\rangle\Big|_{s=0} \nonumber\\
&=\Big\langle {\rm vec}(\rho(s))- {\rm vec}(\mathscr{P}_{\HH_g}(\rho(s))), \nonumber\\
& \quad {\rm vec}(\sum_{\{j,k\}\in \mathrm{E}} \alpha_{jk} \Big(\mathsf{U}_{jk}\rho(s) \mathsf{U}_{jk}^\dag -\rho(s)\Big))\Big\rangle\Big|_{s=0}\nonumber\\
&=0
\end{align}
where $\mathscr{P}_{\HH_g}(\cdot)$ is the projection onto the space $\HH_g$, and the last equality holds from the fact that
\begin{align}
\mathsf{U}_{jk}g \mathsf{U}_{jk}^\dag\in \HH_g
\end{align}
according to the definition of $\HH_g$. Therefore, $h_g(s)\equiv 0$ if $h_g(0)=0$. This proves that $\HH_g$ is invariant and the desired lemma holds. 
\subsection*{B. Proof of Lemma \ref{lem2}}
It suffices to prove that 
$
e^{-\tau L(\mathrm{G})} h_i$
is a positive vector for each $i=1,\dots,n$, where $ h_i$ is the $n\times 1$ unit vector with the $i$-th entry being one. This is equivalent to $$
\mathbf{y}(\tau)=(y_1(\tau),\dots,y_n(\tau))^\top >0
$$ when $\mathbf{y(s)}$ is evolving along the Laplacian consensus 
\begin{align}\label{100}
\dot{\mathbf{y}}(s) =- L(\mathrm{G}) \mathbf{y}(s)
\end{align} 
starting from $\mathbf{y}(0)= h_i$. It is well known that $ \mathbf{y}(s)\geq 0$ along (\ref{100}) for all $s\geq 0$ since the convex hull of $y_1(0),\dots,y_n(0),$ denoted ${\rm co}(y_1(0),\dots,y_n(0))$, is invariant under the Laplacian consensus dynamics. 

Next, we establish the following claim. 

{\it Claim.} Let $\{i,j\}\in \mathrm{E}$. Then $y_j(\tau)>0$ for sufficiently small $\tau>0$.

This claim can be easily established through the fact that 
\begin{align*}
\frac{d}{ds}{y}_j(s) \Big|_{s=0}&=\sum_{k=1}^n [L(\mathrm{G})]_{kj}(y_k(s)-y_j(s)) \Big|_{s=0}\nonumber\\
&=[L(\mathrm{G})]_{ij}\nonumber\\
&>0. 
\end{align*}
This shows that each of $i$'s neighbor will hold a positive state during $[0,\tau_0)$ for some small $\tau_0$. Carrying out the similar analysis recursively to node $i$'s neighbors' neighbor, etc., with connectivity we conclude that each node $i$ will hold a positive state during $[0,\tau_0)$ for some small $\tau_0$.

Finally, once $\mathbf{y}(\tau_0)>0$, there holds $ \mathbf{y}(s)> 0$ for all $s\geq \tau_0$, because, again, ${\rm co}(y_1(\tau_0),\dots,y_n(\tau_0))$ is invariant under the Laplacian consensus dynamics. This proves the desired lemma.

\end{document}